\definecolor{blue}{RGB}{10, 10, 200}
\newcommand{\cmark}{\ding{51}}%
\newcommand{\xmark}{\ding{55}}%
\newcommand{\bheading}[1]{{\vspace{4pt}\noindent{\textbf{#1}}}}
\newcounter{note}[section]
\newcommand{\secref}[1]{\mbox{Sec.~\ref{#1}}\xspace}
\newcommand{\figref}[1]{\mbox{Fig.~\ref{#1}}}
\newcommand{\ignore}[1]{}
\newcommand{\etc}{\textit{etc.}\xspace}
\newcommand{\ie}{\textit{i.e.}\xspace}
\newcommand{\eg}{\textit{e.g.}\xspace}
\newcommand{\etal}{\textit{et al.}\xspace}
\newcommand{\sysname}{\textsc{TeeRollup}\xspace}
\newcommand{\strawman}{\textsc{SRollup}\xspace}
\newcounter{packednmbr}
\newenvironment{packeditemize}{
\begin{list}{$\bullet$}{
\setlength{\labelwidth}{0pt}
\setlength{\itemsep}{2pt}
\setlength{\leftmargin}{\labelwidth}
\addtolength{\leftmargin}{\labelsep}
\setlength{\parindent}{0pt}
\setlength{\listparindent}{\parindent}
\setlength{\parsep}{1pt}
\setlength{\topsep}{1pt}}}{\end{list}}
\theoremstyle{definition}
\newtheorem{theorem}{Theorem}
\newtheorem{lemma}{Lemma}
\begin{document}{
\title{\sysname: Efficient Rollup Design Using Heterogeneous TEE}

\author{
    Xiaoqing Wen, 
    Quanbi Feng, 
    Hanzheng Lyu,
    Jianyu Niu,~\IEEEmembership{Member, ~IEEE,} \\
    Yinqian Zhang,~\IEEEmembership{Member, ~IEEE,} 
    Chen Feng,~\IEEEmembership{Member, ~IEEE}
    \IEEEcompsocitemizethanks{

        \IEEEcompsocthanksitem  
        The work of Xiaoqing Wen and Chen Feng is supported in part by the NSERC Discovery Grants RGPIN-2016-05310.
        Jianyu Niu was supported in part by the NSFC under Grant 62302204.
        
        Xiaoqing Wen, Hanzheng Lyu, and Chen Feng are with Blockchain@UBC and the School of Engineering, The University of British Columbia (Okanagan Campus), Kelowna, BC, Canada. Email: \{xqwen, hzlyu\}@student.ubc.ca and chen.feng@ubc.ca. 
        
        Quanbi Feng, Jianyu Niu, and Yinqian Zhang are with the Research Institute of Trustworthy Autonomous Systems and the Department of Computer Science and Engineering, Southern University of Science and Technology, Shenzhen, China.
        Email: 12011501@mail.sustech.edu.cn, niujy@sustech.edu.cn and yinqianz@acm.org. Corresponding author: Jianyu Niu. 
    }
}
\sloppy

\markboth{Journal of \LaTeX\ Class Files,~Vol.~14, No.~8, August~2021}%
{Shell \MakeLowercase{\textit{et al.}}: A Sample Article Using IEEEtran.cls for IEEE Journals}

\maketitle

\begin{abstract}
Rollups have emerged as a promising approach to improving blockchains' scalability by offloading transaction execution off-chain. 
Existing rollup solutions either leverage complex zero-knowledge proofs or optimistically assume execution correctness unless challenged.
However, these solutions suffer from high gas costs and significant withdrawal delays, hindering their adoption in decentralized applications.
This paper introduces \sysname, an efficient rollup protocol that leverages Trusted Execution Environments (TEEs) to achieve both low gas costs and short withdrawal delays.
Sequencers (\ie, system participants) execute transactions within TEEs and upload signed execution results to the blockchain with confidential keys of TEEs. 
Unlike most TEE-assisted blockchain designs, \sysname adopts a practical threat model where the integrity and availability of TEEs may be compromised.
To address these issues, we first introduce a distributed system of sequencers with heterogeneous TEEs, ensuring system security even if a certain proportion of TEEs are compromised. 
Second, we propose a challenge mechanism to solve the redeemability issue caused by TEE unavailability.
Furthermore, \sysname incorporates Data Availability Providers (DAPs) to reduce on-chain storage overhead and uses a laziness penalty mechanism to regulate DAP behavior. 
We implement a prototype of \sysname in Golang, using the Ethereum test network, Sepolia.
Our experimental results indicate that \sysname outperforms zero-knowledge rollups (ZK-rollups), reducing on-chain verification costs by approximately 86\% and withdrawal delays to a few minutes.
\end{abstract}

\begin{IEEEkeywords}
Blockchain, Rollup, Scalability, Trusted Execution Environment
\end{IEEEkeywords}

\section{Introduction} \label{sec:intro} 
\IEEEPARstart{S}ince the first adoption in Bitcoin~\cite{nakamoto2008bitcoin} in 2008, blockchain technology has experienced remarkable growth in extensive decentralized applications~\cite{10316333, 9732238}.
However, increasing application requests make blockchains encounter severe congestion due to their low throughput, \ie, scalability issues~\cite{dinh2018untangling}.
To improve the scalability, rollups~\cite{starkex, optimism} that offload a substantial portion of transaction execution off-chain emerged as a promising approach. 
Specifically, they can aggregate the execution result of a batch of off-chain transactions into a single transaction (\ie, a \textit{state transition} of the rollup) on the blockchain (commonly referred to as the \textit{main chain}), whereas the main chain only needs to verify the validity of the uploaded state transitions. 
Due to the promising potential, rollups have garnered significant attention and research efforts~\cite{kotzer2024sok}. 
As of January 2025, statistics from L2BEAT~\cite{l2beat} reveal 51 rollup projects in the market, collectively holding a locked value of approximately \$57 billion. 
Among these, Arbitrum~\cite{arbitrum} alone boasts 37 million active accounts.

Current rollup solutions can be classified into two categories based on verification methods:
zero-knowledge rollups (\ie, ZK-rollups)~\cite{starkex, starknet, scroll} and optimistic rollups (\ie, OP-rollups)~\cite{optimism, arbitrum}.
However, these solutions either have high on-chain costs or prolonged withdrawal delays, posing significant challenges to their adoption, particularly in cost-sensitive and time-critical applications.
Specifically, ZK-rollups rely on complex zero-knowledge proofs for execution validation on the main chain, but this approach incurs significant gas costs.
For instance, verifying a proof generated by ZK-SNARK~\cite{zksnark} on Ethereum consumes over 600K gas (approximately \$39).
Additionally, generating these proofs is time-intensive and poses challenges for developing, deploying, and supporting smart contracts.
In contrast, OP-rollups execute transactions optimistically without immediate verification but face delays in fund withdrawals.
This is because OP-rollups introduce a dispute period to ensure security, allowing anyone to detect and report invalid state transitions.
For example, systems like Arbitrum and Optimism impose a withdrawal delay of up to one week to maintain security~\cite{arbitrum, optimism}.}

{To break the trade-off between fast withdrawal and low on-chain cost, we propose \sysname}, which leverages nodes equipped with TEEs to execute transactions off-chain. 
Specifically, \sysname only requires the main chain to verify the TEE-generated signatures. 
Besides, users can utilize remote attestation during initialization to verify the code running within the TEEs, at which point a key pair is generated inside the TEE.
Then, nodes execute transactions within the TEE and generate state transitions signed by the TEE, which are verified by the blockchain.
Ideally (\ie, without attacks), the integrity provided by TEEs ensures correct execution, preventing malicious nodes from modifying user account status and generating fault state transitions. 
Thus, the main chain can trust the state transition of rollup through simple signature verification.

Unfortunately, directly using TEEs introduces security risks due to their vulnerabilities and unavailability.
First, prior research has identified vulnerabilities to compromise TEEs' integrity and confidentiality~\cite{li2021cipherleaks, li2019exploiting, li2021crossline}.
Once a TEE is compromised, a malicious node can disrupt correct transaction execution and generate faulty states with valid signatures from the TEE.
To address this issue, \sysname adopts a distributed system of nodes with heterogeneous TEEs (\eg, Intel SGX~\cite{mckeen2013innovative}, Intel TDX~\cite{tdx}, AMD SEV~\cite{li2022systematic}, ARM TrustZone~\cite{trustzone}, and Hygon CSV~\cite{csv}, \etc).
In \sysname, a state transition must be signed by multiple TEEs to ensure that a single compromised TEE cannot arbitrarily generate a faulty state transition with valid signatures. 
The states of \sysname are organized in a chained structure, where each new state transition is based on the latest state recorded on the main chain. 
Nodes compete for the right to submit state transitions.
However, the main chain will only accept the first valid state transition with sufficient signatures from TEEs, while subsequent submissions are disregarded. 
Additionally, we utilize on-chain registration and attestation, eliminating the need for mutual attestation between heterogeneous TEEs.

Second, TEEs' I/O is manipulated by their hosts (\ie, hypervisors or software systems), allowing malicious nodes to drop messages to and from their TEEs. 
Since the number of nodes in \sysname is small (\eg, tens of entities), there may not be enough honest nodes to ensure system availability. If TEE availability is disrupted, users cannot redeem their deposited funds promptly, leading to the locking of user funds on the blockchain. 
To mitigate this redeemability issue, \sysname introduces a \textit{challenge mechanism} on the blockchain, enabling users to redeem their deposits without relying on sequencers.
Specifically, users can initiate challenges on the main chain and redeem their deposit on the main chain with proof of their balance if no response is received.
Here, to generate the proof of balance, the metadata of the system state is required, not just the digest. 

Third, to reduce the on-chain storage cost, \sysname also incorporates data availability providers (DAPs) to store metadata off-chain. 
Specifically, it only records the digest of metadata on the main chain and offloads the state data and transaction data to the off-chain DAPs.
However, the DAPs can be lazy and discard the metadata that they are supposed to store to minimize their costs.
Thus, we introduce a collateral scheme to punish dishonest behaviors and a \textit{laziness penalty mechanism} to incentivize DAPs' participation. 

We built a prototype of \sysname with Golang,
developing the on-chain smart contracts using Solidity 0.8.0 and deploying them on the Ethereum test network, Sepolia. Our experiments use mainstream off-the-shelf TEE devices, including Intel SGX, Intel TDX, and Hygon CSV.
However, our protocol does not rely on certain platforms and can be extended to more TEEs like ARM CCA once they are mature.  
These experiments can be easily extended to other virtual machine (VM)-based TEEs.

\bheading{Contributions.} We make the following contributions:
\begin{packeditemize}
\item We introduce \sysname, an efficient rollup design that leverages TEEs for off-chain transaction execution. 
\sysname utilizes a group of heterogeneous TEEs, designed to tolerate the compromise of TEEs. Additionally, \sysname incorporates data availability providers to reduce on-chain storage costs. 

\item We address the redeemability issue stemming from TEEs' unavailability. Our challenge mechanism ensures the redeemability of user funds, even during periods of system crash, because of the unavailability issues of TEEs.
Besides, we implement penalization mechanisms to ensure the diligent operation of data availability providers.

\item We conduct experimentation on the prototyped \sysname to evaluate its performance compared to ZK-rollup and OP-rollup. 
Our experimental results demonstrate that the transaction fee in \sysname is significantly lower at \$0.006 compared to StarkNet's \$0.043, showcasing the efficiency of our approach.
Moreover, \sysname maintains a normal withdrawal time of a few minutes, consistent with StarkNet and better than OP-rollup schemes such as Optimism.
\end{packeditemize}

\bheading{Roadmap.} 
We introduce related work in~\secref{sec:background} and the system model and goal in~\secref{sec:model}. 
The preliminaries are provided in~\secref{sec:preliminary}
We present the system design in~\secref{sec:design}, followed by the security analysis in~\secref{sec:analysis}. 
The system performance is evaluated in~\secref{sec:evaluate}. 
Finally, the paper is concluded in~\secref{sec:conclusion}.

\section{Background}\label{sec:background}

\subsection{Revisiting Rollup Solutions}
Rollup solutions alleviate the blockchains' limitations, including low throughput, high fees, and network congestion, by offloading transactions
processing off-chain~\cite{thibault2022blockchain}. 
Rollup networks are composed of sequencers who aggregate a batch of off-chain transactions into a single state transition.
The main chain only needs to verify the validity of the state transition.
According to the verification methods, current rollup schemes can be divided into two categories: ZK-rollups and OP-rollups, as introduced below.

ZK-rollup models, as embodied by implementations like StarkEx~\cite{starkex}, generate zero-knowledge proofs~\cite{zksnark} for the validity of the off-chain state transition.  
The emergence of solutions like StarkNet~\cite{starknet} and Scroll~\cite{scroll}, involves Ethereum Virtual Machine (EVM) compatible zero-knowledge proof. 
This unique capability extends to the verification of contracts (not just transactions) execution correctness, encompassing both input and output validity in the process.
Thomas \etal further introduce an on-demand ZK-rollup creation service~\cite{lavaur2023modular}, which allows several ZK-rollups to co-exist as groups on the same smart contracts, and application examples for Internet of Everything (IoE)~\cite{lavaur2022enabling}.
However, the on-chain verification of zero-knowledge proofs in ZK-rollup systems necessitates a notable gas expenditure. 
For instance, the gas consumption to verify the zero-knowledge proof generated by zk-SNARK on Ethereum is more than 600K gas (about \$39).
For zk-STARK, the gas consumption is even higher, reaching approximately 2.5M gas (about \$162)~\cite{zksnark}.

OP-rollup solutions, implemented by Optimism~\cite{optimism} and Arbitrum~\cite{arbitrum}, pivot towards an optimistic approach for transaction execution. 
These solutions optimistically assume that transactions are correctly executed by sequencers unless a challenger disputes the execution results.
The dispute can be solved on the main chain.
The advantage here lies in omitting on-chain verification, reducing the gas cost. 
Nevertheless, OP-rollups significantly extend the withdrawal time, as they require sufficient time for challengers to verify the proof.
For instance, Arbitrum and Optimism enforce a withdrawal period of one week for security~\cite{arbitrum, optimism}.

\bheading{Summary.} Existing rollup solutions suffer from high gas costs, low compatibility, and long withdrawal delays.
A comparison between existing rollup solutions and \sysname (proposed in this paper) is provided in Table~\ref{table:comparison}.

\begin{table}[t]
\centering
    \begin{threeparttable}
    \caption{\textbf{Rollup Solutions.}
    The Dec, EVM Comp, and Data Avail are short for Decentralized, EVM Compatibility, and Data Availability, respectively.
    Decentralized property means that the transaction cannot be confirmed by a single sequencer.
    The EVM compatibility denotes the ability to support EVM. 
    For Data Availability, Main Chain and Delegated means the data availability is provided by the main chain and a third party, respectively.
    Withdrawal time signifies the waiting time for users to retrieve the funds they locked in the main chain.}
    \label{table:comparison}
    \begin{tabular}{@{}lcccc@{}}
    \toprule
    \textbf{Rollup Solution}  & \textbf{Dec}  & \textbf{EVM Comp}& \textbf{Data Avail}& \textbf{Withdrawal time} \\
     \midrule
    StarkEx~\cite{starkex}            & {\xmark}  & \xmark & {Main Chain}    & Few minutes\\
    StarkNet~\cite{starknet}            &  \xmark & \cmark &  {Main Chain}  & Few minutes\\
    Scroll~\cite{scroll}            &  \xmark & \cmark &  {Main Chain}  & Few minutes\\
    Optimism~\cite{optimism}            &  {\xmark} & \cmark &  {Main Chain}  & One week\\
    Arbitrum~\cite{arbitrum}            &  {\xmark} & \cmark &  {Main Chain}  & One week\\
    
\midrule
\textbf{\sysname}                      & \cmark  & \cmark & Delegated      & Few minutes \\
\bottomrule
\end{tabular}
\end{threeparttable}
\end{table} 

\subsection{Rollup Data Availability}
One critical aspect of rollup solutions is ensuring data availability, which refers to the accessibility and integrity of transaction data necessary for maintaining the security and trustworthiness of the system~\cite{huang2024data}.
Storing complete data on the main chain inherits the security and data integrity of the main chain, but faces challenges such as network congestion and high gas costs.
For instance, storing a 256-bit integer on the Ethereum smart contract \textup{STORAGE} field costs 8K gas (0.52 USD)~\cite{solidity}.
Therefore, it is impractical to simply permanently store the complete data of the rollup on the main chain.
StarkEx~\cite{starkex} and StarkNet~\cite{starknet} address data availability issues by including aggregated compressed transactions in the \textup{CALLDATA} field of the main chain, which reduces the gas cost~\cite{calldata}.
However, this method does not allow for permanent storage of transactions and consumes more gas compared to only recording the digest of the state on the main chain.
Thus, we choose to only record the digest of rollup state (\ie, the state) on the main chain with complete data (\ie, metadata) stored off-chain by the data availability providers (DAPs).

\subsection{Trusted Execution Environment}
Trusted Execution Environments (TEEs) aim to run applications in a secure environment without leaking secrets to an adversary who controls the computing infrastructure. 
Specifically, TEEs provide \textit{enclave} to run code and the attestation mechanism to prove the correctness of computation.
Influential TEE implementations include Intel SGX~\cite{mckeen2013innovative}, Intel TDX~\cite{tdx}, AMD SEV~\cite{li2022systematic}, ARM TrustZone~\cite{trustzone}, and Hygon CSV~\cite{csv}. 

Recently, TEEs have been widely used in blockchain designs to enhance security, privacy, and performance~\cite{lind2019teechain, tesseract, yin2022bool, pose, xu2022l2chain,9705115, wen2024mecury,engraft, narrator,narrator-pro}. 
Teechain~\cite{lind2019teechain} establishes a payment platform under TEE protection, while Tesseract~\cite{tesseract} and Bool Network~\cite{yin2022bool} employ TEE to ensure the security of cross-chain transactions.
Tommaso \etal~\cite{pose} and Liu \etal~\cite{9705115} leverage the TEE for the privacy of off-chain execution and data collection, respectively.
Other studies~\cite{engraft, narrator, narrator-pro, niu2025achilles,xie2025fides,wang2024tiks} focus on utilizing TEEs to improve the performance of Byzantine fault-tolerant consensus protocols. 
{These works assume that TEEs provide integrity and confidentiality guarantees, where the adversary cannot know the private key inside the TEE.
}

However, many studies indicate that TEEs are vulnerable to various attacks, such as side-channel attacks ~\cite{li2021cipherleaks}
, unprotected I/O~\cite{li2019exploiting}
, and ASID abuses~\cite{li2021crossline}.
Thus, in our work, we assume TEEs can be compromised, \ie, no integrity and confidentiality properties, which differs from prior work with the perfect assumption of TEEs~\cite{lind2019teechain, tesseract, yin2022bool, pose, xu2022l2chain,9705115, engraft, narrator,narrator-pro}.  
{
In particular, the adversary can steal the private keys for signing messages from the compromised TEE.
To tolerate malicious behaviors of compromised TEEs, one approach is to use a distributed system of heterogeneous TEEs.
}
This heterogeneity can arise from using TEEs from different vendors, different types of processors, or different levels of security requirements~\cite{pouyanrad2023end}.
Since different TEE platforms adopt different hardware and software architectures, it is difficult to breach their security simultaneously. 
Besides, when a TEE is compromised, manufacturers will timely resolve these issues, making security compromise of all TEEs nearly impossible~\cite{liu2015oblivm, lang2022mole}.

{
\subsection{Preliminaries}\label{sec:preliminary}
We introduce the cryptography primitives, Merkle tree, and the model of TEEs used in the \sysname protocol.

\vspace{2mm} \noindent \textbf{Cryptography primitives.}
Our protocol utilizes the public key encryption scheme ($\textsc{GenPK}, \textsc{Enc}, \textsc{Dec}$), a signature scheme ($\textsc{Sign}, \textsc{Verify}$), and a secure hash function $\textsc{H}(\cdot)$.
All messages are signed by their respective senders, with a message $m$ signed by party $p$ written as $\sigma \leftarrow \textsc{Sign}(m;p)$, where $\sigma$ is the signature.

\vspace{2mm} \noindent \textbf{Merkle tree.}
Merkle tree is a binary tree data structure in which each leaf node contains the hash of a data block (\eg, the key-value pair), and each non-leaf node contains the hash of its child nodes. 
Constructed bottom-up, a single hash at the root can be generated, representing the entire dataset's integrity. 
The root of a Merkle Tree is integrated into each block, serving as a comprehensive state digest, enabling sequencers to maintain and verify the state digest efficiently.
The Merkle Tree allows leaves to generate a Merkle path for verifying specific key-value pairs under a root, denoted as $\delta \leftarrow root.proof(\left<k,v\right >)$.
Anyone can verify this pair of key-value through $verifyproof(\delta, \left<k,v\right >)$.

\vspace{2mm} \noindent \textbf{Trusted Execution Environments (TEEs).}
The sequencers can instruct their TEEs to create new enclaves, \ie, new execution environments running a specific program.
We follow prior work~\cite{pass2017formal} to model the functionality of TEEs. 

\begin{packeditemize}
    \item $eid \leftarrow install(prog)$: Installing the program $prog$ as a new enclave with a unique enclave ID, $eid$ within the TEE.

    \item $(outp, \sigma) \leftarrow resume(eid, inp)$: 
    Resuming the enclave from the crash with ID $eid$ to execute program $prog$ using the input $inp$. $\sigma$ is the TEE's endorsement, confirming that $outp$ is the valid output. 
\end{packeditemize}

For uncompromised TEE, the verification of the output $outp$ is trusted by users when $verify(\sigma, eid, prog, outp) = 1$, indicating the successful attestation of the TEE.
Besides, we assume an attestation API for TEE to generate an attestation quote $\rho \leftarrow attest(eid, prog)$ proving that the program $prog$ has been installed in the enclave $eid$.
And $\rho$ can be verified through the method $verifyquote(\rho)$. 
}

\section{Problem Statement and System Model} \label{sec:model}
We first present the system model and threat model. Then, we introduce the associated challenges and solutions. 
Table~\ref{table:notation} lists the frequently used notations.

\begin{table}[t]
    \footnotesize
    \centering
    \setlength{\abovecaptionskip}{0cm}
    \caption{\textbf{Summary of notations.}}
    \label{table:notation}
    \scalebox{0.95}{
    \begin{tabular}{@{}m{0.9cm}l|m{0.9cm}l@{}}
        \toprule[1pt]
        Term    & Description  &  Term    & Description    \\
        \midrule
        $\mathcal{M}$       &Main chain                    & $st_h$           & State of \sysname\\
        $h$               & Height of $st_h$  & TSC                 & \sysname smart contract \\
        MSC                 & Manager smart contract       & $n$               & Number of sequencers\\
        TToken              & Tokens of \sysname           & $p_i$             & Sequencers in \sysname \\
        $T_{addr}$          & TSC account on $\mathcal{M}$ & $\eta_i$          & Enclave for sequencer $p_i$\\
         $(pk_i, sk_i)$    & Key pair for enclave $i$ & $\tau$            & Timer for resolve challenge  \\
        \bottomrule[1pt]
    \end{tabular}}
    \vspace{-0.4cm}
\end{table}

\subsection{System Model}
We consider a system of $n$ sequencers, denoted by the set $\mathcal{G} = \{p_1, p_2, ..., p_n\}$.  
We follow the existing assumption~\cite{eigenlayer}, where each sequencer $p_i$ is equipped with one type of TEE platform (\eg, Intel SGX, Intel TDX, etc.), denoted by $\eta_i$. 
{We assume all sequencers can be malicious by controlling the host of TEEs with at most $f$ sequencers' TEEs are compromised at any time.}
Here, since TEE machines adopt different architectures and are produced by various manufacturers, it is difficult to breach the security of $f$ ones at the same time. 
The parameter $f$ can be determined by the rollup service provider. There is a public/private key pair of the sequencer $p_i$, denoted by $(pk_i, sk_i)$, in which the private key is only stored and used inside  TEE.

\subsection{Threat Model}\label{subsec:threatmodel}
We model the malicious sequencers as Byzantine adversaries, \ie, they can behave in arbitrary ways.
{To capture the different types of threats each component may introduce, we define the malicious behavior of the sequencer and its TEE separately. 
The threat model of sequencers consists of four cases: the honest sequencers with either compromised or uncompromised TEEs, and the malicious sequencers with either compromised or uncompromised TEEs.}
\begin{packeditemize}
    \item \textbf{Case 1: Honest sequencers with uncompromised TEEs.} The honest sequencers with their TEEs do not deviate from the protocol.

    \item \textbf{Case 2: Honest sequencers with compromised TEEs.} The honest sequencers with their TEEs do not deviate from the protocol.
    
    \item \textbf{Case 3: Malicious sequencers with uncompromised TEEs.} The malicious sequencers have full control over the operating system of their TEEs, including root access and control over the network.
    The sequencers can arbitrarily launch, suspend, resume, terminate, and crash TEEs at any time. 
    Besides, the sequencers can delay, replay, drop, and inspect the messages sent to and from TEEs, \ie, manipulating input/output messages of TEEs.
    In other words, the TEE cannot guarantee availability due to these manipulations.
    
    \item \textbf{Case 4: Malicious sequencers with compromised TEEs.} The sequencers have full control of the enclave and know the private key $sk_i$ inside the compromised TEE. In this case, TEEs have no integrity or confidentiality guarantees.
    We assume static corruption by the adversary, where a ﬁxed fraction of all sequencers' TEEs is compromised.
\end{packeditemize}

Existing TEE-based blockchain research~\cite{pose, tesseract} only focuses on cases 1 and  3, without considering compromised TEEs. 
We extend this by considering up to $f$ compromised TEEs of sequencers and scenarios where all sequencers can be malicious.
For data availability providers,  we follow the existing assumption in~\cite{tas2023cryptoeconomic} that they are all rational.

\bheading{Main chain model.} \label{sec:chaninmodel} 
We assume the \sysname is built on the main chain $\mathcal{M}$.
The main chain provides finality (\ie, once a transaction is included in a block, it is considered final)~\cite{zhang2020analysis} and enables smart contracts~\cite{solidity}. 
The finality of a transaction can be verified by checking whether the including block satisfies certain public rules~\cite{ethereum} (\ie, the block is correctly formatted, contains valid signatures, and so on).
Smart contracts can access the current time using the method $block.timestamp$ and provide cryptographic schemes including hash computing and ECDSA encryption~\cite{solidity}.

\subsection{System Goals}\label{subsec:goal}
The rollup system \textit{issues} tokens (denoted by TTokens), and allows clients to \textit{transfer} tokens in the rollup or \textit{redeem} tokens on the main chain.
A client on the main chain $\mathcal{M}$ with account $c_{addr}$ can enter rollup by depositing on the account $T_{addr}$ held by the rollup.
Then, the rollup issues the TTokens and locks the deposit on the account $T_{addr}$.
{Clients can submit transactions to the sequencers to transfer within the account of the rollup.}
Sequencers handle transactions and update the state of the rollup to the blockchain $\mathcal{M}$.

{The state of the rollup can be denoted as $st_h$, where $h$ is the sequence number (referred to as the height) of a state $st$. }
The processing of transactions inside rollup can be defined as $st_{h+1} \leftarrow execute(st_h, txs_{h+1})$, where $txs_{h+1}$ is a batch of transactions.
The rollup system has to satisfy the following two properties:

\begin{packeditemize}
\item \textit{Correctness:} \sysname must ensure the integrity and correctness of state transitions on the main chain.

\item \textit{Redeemability:}
Any client can redeem their TTokens in \sysname.
\end{packeditemize}

\subsection{Design Challenges} \label{subsec:strawman}
To better outline the \sysname design, we first introduce a strawman solution (referred to as \strawman) that leverages TEE to shield the sequencer from malicious behavior.
In \strawman, a single sequencer collects and executes the transactions from clients and submits the execution result to the main chain.
While \strawman always provides the fundamental functions of issuing, transferring, and redeeming TTokens, it does not achieve the \textit{Redeemability} defined in~\secref{subsec:goal}.
This is because the \strawman does not consider the compromised TEE and potential maliciousness of sequencers and data availability providers.

\bheading{Compromised TEE.}
As mentioned in ~\secref{subsec:threatmodel}, TEEs can be compromised, allowing malicious sequencers to acquire the secret keys inside the compromised TEEs. 
Malicious sequencers with compromised TEEs can forge false transactions or execute transactions incorrectly to generate incorrect states with valid signatures. 
Consequently, a single TEE cannot be fully trusted by the system.
To mitigate TEE compromises, \sysname employs a group of sequencers equipped with independent and heterogeneous TEEs. 
Sequencers validate and sign the state transitions within their TEEs, competing for the right to submit them. 
{The main chain will only accept the first received valid state transition, based on the latest state, along with $f+1$ signatures. 
Additionally, \sysname leverages on-chain attestation during registration, which eliminates the need for pairwise attestation between heterogeneous TEEs, thereby reducing message complexity.
}

\bheading{Malicious sequencers.} 
According to our threat model (see~\secref{subsec:threatmodel}), the operating system running TEEs is fully controlled by their sequencers. 
Malicious sequencers can crash their TEEs, censor transactions, or selectively filter out certain client requests. 
In this case, \sysname cannot process client requests, preventing clients from redeeming their deposits and compromising \textit{redeemability}.
To address this, \sysname introduces a challenge mechanism to ensure redeemability and enable on-chain settlement when TEEs fail to provide service. 
If a client finds their transaction unprocessed for an extended period, they can submit the transaction data to the main chain via a challenge method implemented in the smart contract. 
When a challenge is initiated, a timer starts. 
If the sequencer fails to respond within the allotted time, the system is deemed unavailable, and all the deposits will be settled on the main chain.
This mechanism incentivizes sequencers to respond promptly to challenges, preventing system-wide settlement while ensuring clients can redeem their funds even during system unavailability.

\bheading{Lazy data availability providers.} 
The states of \sysname are recorded on the main chain. 
Therefore, the correctness of the metadata provided by DAPs can be verified by the $state$ recorded on the smart contract (\ie, DAPs cannot forge account balance, transaction data, or other metadata).
However, the DAPs can show laziness, discarding the data they are supposed to store.
Thus, clients and sequencers cannot get metadata of the $state$ on smart contracts.
\sysname introduces collateral and laziness penalty mechanisms to incentivize DAPs to behave diligently. 
{
DAPs need to provide collateral when entering the network, and any dishonest behavior will result in slashing penalties.
For the slashing mechanism, we adopt the design proposed in~\cite{tas2023cryptoeconomic}, while incorporating the parameters of \sysname.
}

\begin{figure}[t]
    \centering
    \includegraphics[width=9cm]{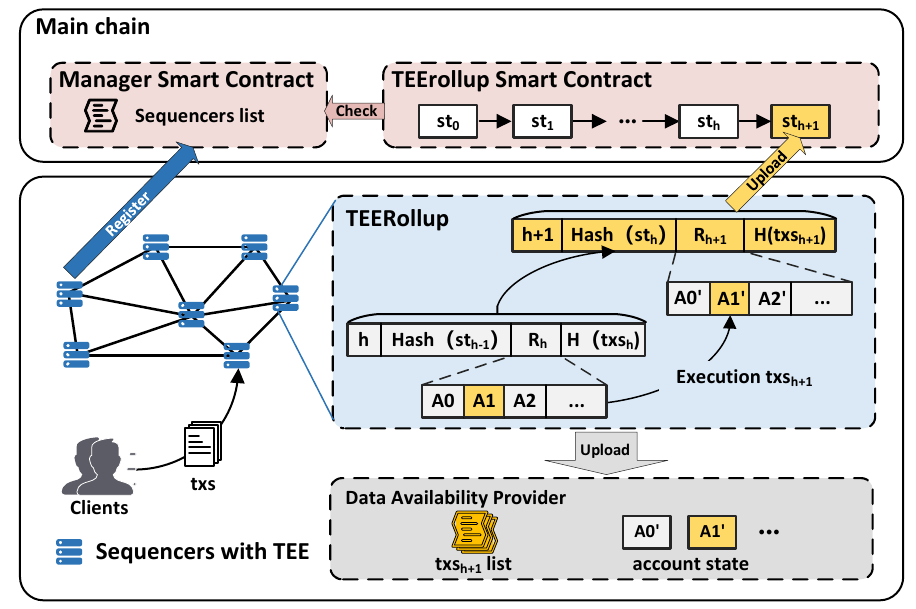}
    \caption{An overview of \sysname architecture. The red components represent smart contracts on the main chain, the blue components indicate operations executed within the TEE, and the gray components correspond to the storage provided by data availability providers.}
    \label{fig:architecture}
\end{figure}

\section{\sysname Design}\label{sec:design}
\subsection{Overview}
\figref{fig:architecture} shows the architecture of \sysname, a rollup based on TEE.
\sysname considers four roles: the \sysname Smart Contract (TSC) responsible for record-keeping on the main chain, the Manager Smart Contract (MSC) for managing sequencers, the TEE-equipped sequencer committee processing off-chain transactions, and the data availability provider (DAP) storing metadata for \sysname. 
\begin{packeditemize}
    \item \textbf{\sysname smart contract (TSC).} TSC records the \sysname state proof in the form of a state (rather than complete metadata) to reduce on-chain storage cost.

    \item \textbf{Manager smart contract (MSC).} MSC manages registered TEEs of sequencers, especially assigning an ID for them and recording the private key $m_{pk}$ inside the TEE.

    \item \textbf{Sequencer.} Sequencer primarily collects and executes transactions from clients, submits the state to TSC, and sends metadata to DAP.

    \item \textbf{Data availability provider (DAP).} DAP accepts and stores the metadata sent by sequencers to ensure access to system metadata at any time. The metadata includes the state of the system (\ie, account balances), the update history, and transaction data.
\end{packeditemize}

There are four key procedures in \sysname. 
First, during the initialization phase, the service provider deploys both the TSC and the MSC on the blockchain $\mathcal{M}$.
Subsequently, sequencers register their TEEs on the MSC (see~\secref{subsec:register}). 
Second, clients deposit funds on the TSC and request the issuance of TTokens in \sysname, which can be used for transfers. 
{Sequencers receive transactions from clients, process these transactions, and generate an updated state. 
Then, the sequencers broadcast the new state along with metadata and collect votes. 
Once enough votes are collected, the sequencers submit the state of \sysname to the TSC and send the metadata to the DAPs (see~\secref{subsec:normal}).}
Third, if clients' transactions are not processed in time, clients can challenge the TSC to enforce the transaction. 
Even if the entire system cannot respond, they can get a refund directly from the TSC to avoid the property loss  (see~\secref{subsec:challenge}).
Fourth, to better regulate the behavior of DAPs and prevent laziness, we use the incentive method of collateral plus laziness punishment (see~\secref{subsec:dap}). 

{
\subsection{{Data Structure}}\label{subsec:data}

\bheading{State format.}
In \sysname, the state acts as a digest of the metadata, and anyone can verify the transactions, balance, and other information of \sysname using this digest. 
The state $st_{h}$ on height $h$ 
has the following structure:
$$st_{h} := \left<h, H(st_{h-1}), R_{h}, H(txs_h)\right>$$
where $H(st_{h-1})$ refers to the hash of the previous state $st_{h-1}$, $R_{h}$ is the Merkle root of the account tree of \sysname, and $H(txs_h)$ is the hash of the transactions list executed to generate the $st_{h}$.
Account tree is a Merkle Tree that stores the clients' accounts in \sysname, denoted as $A_h$.
Account tree $A_h$ consists of the account address and the balance for all accounts.
{ Specifically,  a standard Merkle-Tree is used for the account tree structure. 
We acknowledge that a Merkle Patricia Tree (MPT) could offer additional optimizations, especially for sparse datasets. 
Future work could explore the potential benefits of integrating MPT to improve performance and scalability.}

In \sysname, the states are formatted as a chain structure by following Bitcoin~\cite{nakamoto2008bitcoin},  Ethereum~\cite{ethereum}, and some BFT protocols~\cite{buchman2016tendermint}.
Every state contains the hash of the previous state, and the state can be indexed by height (\ie, the distance from the initial state).
{If there is a path from $st_{h+1}$ to $st_{h}$ then we say that $st_{h+1}$ extends $st_{h}$  }
In \sysname, there is only one state at each height.
Once the latest state is obtained, the correctness of any historical state can also be verified.

\bheading{Vote and certificate.}
\sysname allows sequencers to compete for the right to submit states. 
Specifically, any sequencer can become the leader, broadcast the state, and collect votes.
However, the TSC will only accept the first state that arrives at one height.
A vote $v_i$ from the sequencer $p_i$ of state $st_h$ has the following structure:
$$v_i := \left< H(st_h), pk_i, \sigma_i \right>$$
where $H(st_h)$ is the hash of $st_h$, $pk_i$ is the public key for sequencer $p_i$, and $\sigma_i$ is a signature created by the sequencer $p_i$ over $H(st_h)$.
Here, the signature is generated inside the TEE of sequencer $p_i$ with the private key $sk_i$. 
If there is a set of $f+1$ signatures, it forms the quorum certificate (QC) for the state.
Here, a QC can be implemented as multi-signatures or aggregated signatures.

\begin{figure*}[t]
    \centering
    \includegraphics[width=18cm]{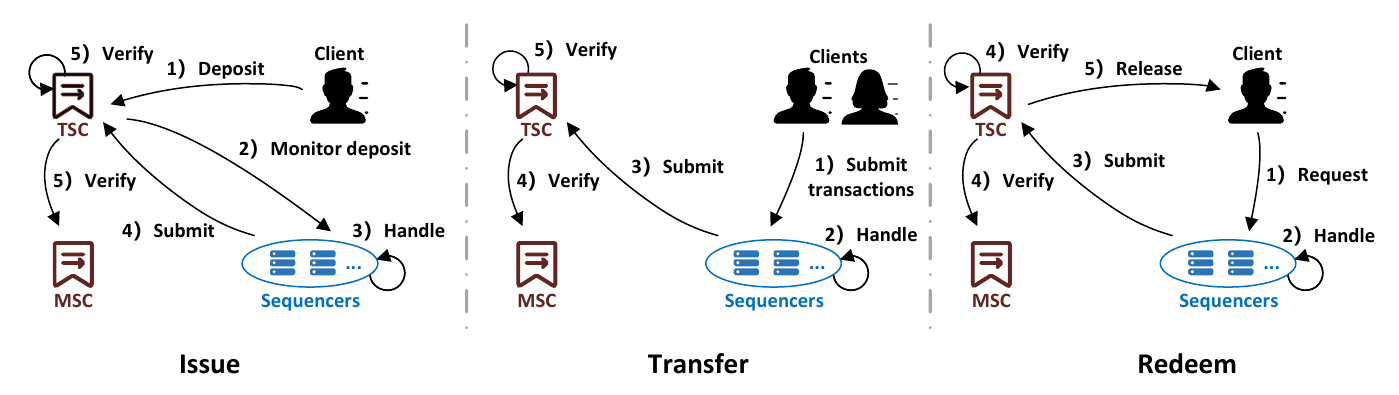}
    \caption{An overview of the main functions provided by \sysname.}
    \label{fig:protocol}
    \vspace{-4mm}
\end{figure*}

\subsection{Sequencers Registration and Configuration}\label{subsec:register}
\sysname implements the MSC to manage the sequencer committee, which provides the registration and configuration for sequencers.

\bheading{Enclave registration.}
Sequencers equipped with TEEs can participate in \sysname by creating an enclave $\eta_i$ and registering it on the MSC.
To clarify the process, we present an example of a sequencer $p_i$ registering on the MSC for clarity.
First, the sequencer $p_i$ creates an enclave $\eta_i$ and installs the \sysname program $prog$ inside it. 
Upon the creation of $\eta_i$, an asymmetric key pair $(pk_i, sk_i)$ is generated. 
Notably, the public key $pk_i$ also serves as an account address for $\eta_i$ on the main chain~\cite{ethereum}.
The secret key $sk_i$ is securely stored within $\eta_i$, and the public key $pk_i$ is returned as output to the sequencer $p_i$. 

Then, the sequencer $p_i$ registers $\eta_i$ by invoking \textsf{Register} with $\left<\eta_i, \rho_i, pk_i \right>$ on the MSC. 
Here, $\rho_i$ represents an attestation proof generated by $\eta_i$, which certifies its correct execution of the program $prog$.
The {MSC verifies} that $verifyquote(\rho_i)$ = 1~\cite{pass2017formal}.
Upon successful validation, the MSC adds the sequencer $p_i$ and $pk_i$ to the sequencer list. 
The registration process guarantees the correct loading of the program $prog$ for all registered enclaves and ensures that the secret key $sk_i$ remains confidential during registration. 
{Since the heterogeneous TEEs can be registered and attested on the MSC, they can rely on the MSC to establish trust without needing mutual attestation. 
As a result, sequencers with heterogeneous TEEs can trust each other and use key pair $\left<pk_i, sk_i \right>$ to generate signatures between TEEs.}
There is no need to re-attest enclaves in subsequent protocol steps.

\bheading{Sequencer configuration.} 
As described earlier, sequencers register their enclaves on the MSC, where their identities and public keys are securely recorded. 
The MSC manages the sequencers and provides authentication services to the TSC. 
Specifically, the MSC maintains a list of sequencers, including their public keys and the committee strategy used by the TSC. 
When the TSC needs to verify messages sent by sequencers, it can invoke the \textsf{VerifyQC} function of the MSC to ensure the authenticity of the QC. 
In this paper, we adopt a voting strategy, where a QC is considered valid if it includes signatures from more than $f+1$ sequencers in the committee.

\subsection{Normal-Case Operations}\label{subsec:normal}

As shown in~\figref{fig:protocol}, \sysname provides issue, transfer, and redeem functions for clients.

\subsubsection{Issue}
Client deposits in the TSC, and the \sysname issues the corresponding amount of TTokens for the client.
The exchange rate between the currency of blockchain $\mathcal{M}$ and TToken is beyond the scope of this paper.
The rate is set to be 1 for simplicity.

\bheading{Step~1.}
The client invokes the \textsf{Deposit} on TSC, denoted as a tuple of $\left<s_{addr}, v\right>_\sigma$, where $s_{addr}$ is the client's address and $v$ is the deposit value (greater than zero), and $\sigma$ is the signature of the client.
{\sysname follows the existing practices of rollup systems by sharing the same key algorithm as the main chain~\cite{zksnark}.
This means that a client holding an account address (\ie, public key) and its corresponding secret key can use the same key pair on both the main chain and the rollup.}

\bheading{Step~2}.
The enclave $\eta_i$ within the sequencer $p_i$ monitors the $deposit$ transaction on TSC.
Enclave $\eta_i$ issues TTokens and changes the balance of account $s_{addr}$ in \sysname.
Enclave $\eta_i$ batches multiple transactions together to optimize efficiency and executes these transactions, generating the new state $st_{h+1}$ and a $lock$ transaction denoted as $\left<id\right>$, where $id$ is the number of the $deposit$.
The sequencer $p_i$ sends the $st_{h+1}$ and $lock$ with the metadata to other sequencers, who vote for the $st_{h+1}$.

\bheading{Step~3}.
After collecting $f+1$ votes, the sequencer $p_i$ generates the QC for $st_{h+1}$ and sends $st_{h+1}$ and the QC to the TSC.
In \sysname, state  $st_{h+1}$ that meets the following conditions can be accepted by the TSC.
First, the height of the last state in TSC is $h$.
Second, $st_{h+1}$ contains the hash value of $st_{h}$, which corresponds to the last state in TSC.
Third, the QC for $st_{h+1}$ is valid.
Once certified, the TSC records $st_{h+1}$ on the main chain. 
Additionally, the TSC locks the client’s $deposit$, which cannot be withdrawn unless the client exits \sysname. 
The update of $st_{h+1}$ and the locking of the $deposit$ are performed in a single operation on the TSC to ensure atomicity.

\bheading{step~4}. 
After $st_{h+1}$ and the lock transaction is confirmed on the main chain, the sequencer $p_i$ returns the execution result to the clients and sends the metadata to the DAPs.

\subsubsection{Transfer}
Clients transfer their TTokens by sending transactions to the sequencer.

\bheading{Step~1.}
A client submits a transaction $tx$ to the sequencers.
$tx$ can be denoted by tuple of $\left<s_{addr}, r_{addr}, v \right>_\sigma$, where $s_{addr}$ is the sender's address, $r_{addr}$ is the receiver's address, $v$ is the transferred value (greater than zero), and $\sigma$ is the signature of the sender.

\bheading{Step~2}.
The execution of $tx$ changes the balance of accounts $s_{addr}$ and $r_{addr}$.
Transaction $tx$ is batched with other transactions into $txs_{h+1}$.
Enclave $\eta_i$ executes $txs_{h+1}$, and generates the new state $st_{h+1}$ signed with its private key $sk_i$. 
Then, the sequencer $p_i$ broadcasts $st_{h+1}$ along with the transactions $txs_{h+1}$ to others and waits for their votes.

\bheading{Step~3}.
After collecting $f+1$ votes, the sequencer $p_i$ generates the QC for $st_{h+1}$ and sends $st_{h+1}$ and the QC to the TSC.
TSC verifies that QC is valid, and then records $st_{h+1}$.

\bheading{Step~4}. 
If $st_{h+1}$ is confirmed on the main chain, the sequencer $p_i$ returns the execution result to the clients and sends the metadata to the DAPs.

\subsubsection{Redeem}
Client burns TTokens in 
\sysname and TSC refunds to the Client.
To burn TTokens, clients transfer them to the designated account $a_0$, which is a burn-only account.
Account $a_0$ can only receive tokens and cannot transfer them, ensuring that tokens sent to $a_0$ are permanently considered burned.
{
\sysname employs an off-chain redemption method to minimize gas costs. To address redeemability challenges, \sysname incorporates a challenge mechanism, detailed in~\secref{subsec:challenge}. 
}

\bheading{Step~1.}
Clients submit requests to the sequencer $p_i$, and the request can be denoted as $\left<s_{addr}, a_0, v \right>_
{\sigma}$, where $s_{addr}$ is the client's address, $v$ is the transferred value (greater than zero), and $\sigma$ is the signature of the client.

\bheading{Step~2}.
Enclave $\eta_i$ batches $tx$ with other transactions, executes them, and generates the new state $st_{h+1}$ with a $refund$ transaction for the TSC, denoted as $\left< s_{addr}, v\right>$.
Enclave $\eta_i$ signs $st_{h+1}$ and $refund$ with its private key $sk_i$. 
Then, the sequencer $p_i$ sends $st_{h+1}$ with $refund$ to others and waits for their votes.

\bheading{Step~3}.
After collecting $f+1$ votes, the sequencer $p_i$ generates the QC for $st_{h+1}$ and $refund$.
Then, the sequencer $p_i$ submits $st_{h+1}$ and $refund$ along with the QC to the TSC.
If the QC is verified as valid, the TSC records $st_{h+1}$ on the main chain and refunds currency with a value of $v$ to $s_{addr}$.

\bheading{Step~4} 
After $st_{h+1}$ and the $refund$ transaction are confirmed on the main chain, the sequencer $p_i$ returns the execution result to the clients and sends the metadata to the DAPs.

\begin{algorithm}[ht]
\caption{Challenge Algorithm of TRollup Smart Contract}\label{alg:smartcontract}
\begin {algorithmic}[1]

    \State{\textbf{Function}~\textbf{\textsc{Init}}}
    \State{~~~~ContractState $\leftarrow$ Active}
    \State{~~~~$Chal \leftarrow \emptyset$}
    \State{~~~~$Dep \leftarrow \emptyset$}

\State
\State{\textbf{Function~\textsc{Deposit} ($M_{addr}, T_{addr},v$)} }
    \State{~~~~\textbf{Require} ContractState $=$ Active}
    \Statex{~~~~~~~~~~~~~~~$\wedge$ $value > 0$}
    \State{~~~~$id \leftarrow $ H($M_{addr}, block.timestamp$)}
    \State{~~~~$Dep_{id}.sender \leftarrow M_{addr}$}
    \State{~~~~$Dep_{id}.value \leftarrow v$}
    \State{~~~~$Dep_{id}.solved \leftarrow false$}
    \State{~~~~Trigger Deposit \textbf{event}}

\State
\State{\textbf{Function~\textsc{UpdateState}($QC, S$)}} 
    \State{~~~~\textbf{Require} ContractState $=$ Active}
    \Statex{~~~~~~~~~~~~~~~$\wedge$ $S.h = height$}
    \Statex{~~~~~~~~~~~~~~~$\wedge~S.preHash = H(st_{height})$}
    \Statex{~~~~~~~~~~~~~~~$\wedge$ \textbf{\textsf{VerifyQC}($QC$)}}
    \State{~~~~$height \leftarrow t+1$}
    \State{~~~~$st_{height}.root \leftarrow S$}
  
\State
\State{\textbf{Function~\textsc{StartChallenge} ($M_{addr}, tx, pledge$)}} 
    \State{~~~~\textbf{Require} ContractState $=$ Active}
    \Statex{~~~~~~~~~~~~~~~$\wedge$ $pledge \geq Pledge_C$}    
    \State{~~~~$id \leftarrow $ H($M_{addr}, tx, block.timestamp$)}
    \State{~~~~$Chal_{id}.tx \leftarrow tx$}
    \State{~~~~$Chal_{id}.startChal \leftarrow block.timestamp$}
    \State{~~~~Trigger Challenge \textbf{event}}  
\State
\State{\textbf{Function~\textsc{ResloveChallenge} ($id, QC$)}} 
    \State{~~~~\textbf{Require} ContractState $=$ Active}
    \Statex{~~~~~~~~~~~~~~~$\wedge$ \textbf{\textsf{VerifyQC}($QC$)}}
    \State{~~~~\textbf{\textsc{Delete}($Chal_{id}$)} }
\State
\State{\textbf{Function~\textsc{SettleRollup} ($id$)}} 
    \State{~~~~\textbf{Require} ContractState $=$ Active}
    \Statex{~~~~~~~~~~~~~~~$\wedge$ $block.time - Chal_{id}.startChal > T_w$}
    \State{~~~~ContractState $\leftarrow$ Frozen}
    \State{~~~~Trigger Settle \textbf{event}}
\State
\State{\textbf{Function~\textsc{SettleWithdraw} ($T_{addr}, M_{addr}, b, \delta, \sigma$)}} 
    \State{~~~~\textbf{Require} ContractState $=$ Frozen}
    \Statex{~~~~~~~~~~~~~~~$\wedge$ \textbf{\textsf{Verify}($T_{sddr}, \sigma$)$ = 1$}}
    \Statex{~~~~~~~~~~~~~~~$\wedge$ \textbf{\textsf{VerifyProof}($\delta, b, T_{sddr}$)$ = 1$}}
    \State{~~~~\textbf{\textsc{Refund}($M_{addr}, b$)} }
\end{algorithmic}
\end{algorithm}

\subsection{Challenge Mechanism}\label{subsec:challenge}
To address the redemption issue, we propose a challenge mechanism deployed in TSC.
In \sysname, clients can initiate a challenge in TSC if their request is not processed.
If there is no response from the sequencers, clients can also withdraw their deposit from TSC with the proof of balance provided by DAP.

First, clients invoke the \textbf{\textsc{StartChallenge}} to initiate a challenge on TSC and receive an $id$ (calculated by the hash of the sender address, transaction, and block timestamp) as a return (cf. Algorithm~\ref{alg:smartcontract} line 21).
Then, TSC sets a timer and triggers a $challenge$ event with $\left<tx, block.timestamp\right>$.
If the enclave receives the $challenge$, it executes the $tx$ in the next batch and votes for it before $\tau$ passes (cf. Algorithm~\ref{alg:smartcontract} line 26-28).
Therefore, when the sequencers provide the QC for the $tx$, it proves that the enclaves have received the transaction and executed it.
{To give the sequencers sufficient time to process the transactions and respond, $\tau$ is set to a duration of several hours.
The waiting time $\tau$ for the challenge does not affect the execution efficiency in normal cases.}

On the other hand, if the sequencer drops the request and cannot generate the response to the challenge in time, clients can invoke the \textbf{\textsc{SettleRollup}}($id$) to settle TSC.
TSC verifies whether the current block time has exceeded the start time of the challenge plus the waiting time $\tau$. 
If the verification is passed, the contract state of TSC becomes frozen (cf. Algorithm~\ref{alg:smartcontract} line 32).
At this time, sequencers can no longer update the state, and the deposit is also not allowed, while the client can withdraw their balance.
To prevent the client's malicious challenge (for example, they do not submit a transaction $tx$ to the sequencer, but carry out a challenge on the chain), the client needs to lock collateral on TSC when initiating a challenge (cf. Algorithm~\ref{alg:smartcontract} line 20).
If the challenge fails, the collateral will be forfeited. 
Additionally, to prevent malicious sequencers from censoring the client's transaction by dropping messages sent to the enclave, the client must submit the $tx$ to the TSC. 
This ensures that the transaction is publicly available and can be processed by the sequencers. 
If the sequencers fail to process the transaction and instead attempt to suppress it by withholding or dropping the transaction, the sequencers will be penalized, and rollup will also be settled.
{If a client finds that its transactions are consistently delayed and require challenges, it can redeem its deposit from \sysname.}

To withdraw the balance $b$ in account $a$, the client needs to provide two proofs: (i) the client has the key of account a; and (ii) the balance in account a is b.
First, according to~\secref{subsec:normal}, the account address $a$ is the public key and the client can generate the signature with the private key, proving their control of account $a$.
Second, the client can get the proof of balance from DAP, \ie, the Merkle proof for the balance of the client's account.
The state recorded in TSC contains the $R$, which is the Merkle root of the account balance.
Thus, the DAPs can provide the balance $b$ and proof $\delta \leftarrow st_{h}.R.proof(a, b)$, where $a$ is the address for the account $b$ in~\sysname.
Client generates the signature $\sigma$ for $\delta$ with the secret key of account $b$ and invoke \textbf{\textsc{SettleWithdraw}($T_{addr}, M_{addr}, b, \delta, \sigma$)}.
Finally, TSC verifies the proof of balance and the signature $\delta$.
If successful, TSC refunds the balance of the account $b$ to the client.
Thus, every client can redeem their deposit on TSC.

The above design solves the problem that the client's transfer and withdrawal (\ie, a transfer to a specific account) transactions are not executed, but the client's deposit is also at risk of not being processed.
Thus, in our design, every $deposit$ should be confirmed by the sequencers.
When clients raise a TTokens request and lock the $deposit$ on the TSC, the $deposit$ is unsolved, and a timer with a countdown of $\tau$ is set.
Then, sequencers process the $deposit$, issue TTokens for the client, and submit the new state to TSC.
Upon the new state being confirmed on the main chain, sequencers respond to the $deposit$, and set it to be solved.å
If the $deposit$ is unsolved until $\tau$ passes, TSC can refund $deposit$ to the clients.

\subsection{Data Availability Provider}\label{subsec:dap}
{
To reduce on-chain storage costs, we adopt an off-chain method for ensuring data availability.
For a state $st_{h}$, DAPs store the transaction lists $txs_h$ and account tree $A_h$, which together constitute the metadata of \sysname.
Alongside the submission of updated states to the TSC, sequencers simultaneously upload the metadata to the DAPs. 
DAPs store the metadata and provide it to clients when settlement occurs (as mentioned in~\secref{subsec:challenge}). 
However, DAPs can act lazily (\eg, withholding data or lazy validation) and pretend as if metadata were stored. 
Therefore, to motivate DAPs to store metadata diligently, we have designed a laziness penalty mechanism to punish the lazy DAP.

}

\bheading{Registration.}
To ensure that DAPs store the metadata of the state recorded on TSC, we require DAPs to register and lock a specified collateral amount.
For clarity, we provide an illustrative example of a DAP registering on the TSC.
Initially, the DAP registers by submitting $\left<D_{addr}, v\right>_{\sigma}$ to TSC, where $D_{addr}$ represents the DAP's address (\ie, the public key), and $v$ is the collateral value.
Upon receipt, TSC authenticates the signature $\sigma$ and validates whether $v$ exceeds the minimum requirement value for collateral, denoted as $C$.
If a DAP successfully passes the verification, it is authorized to store the metadata of \sysname.

{
\bheading{Laziness penalty mechanism.}
We introduce the collateral $C$ and the laziness penalty mechanism to incentivize DAPs.

Specifically, the client can initiate random data requests to the DAPs, and the node that fails to post a response loses part or all of its collateral (\ie, a slashing of collateral).

Once the data request is triggered, the DAPs should respond to the TSC within time $l$, otherwise, the DAPs will suffer a slashing of collateral.
The response consists of the metadata and the Merkle path proof for the metadata.

For the slashing mechanism, we follow the design in~\cite{tas2023cryptoeconomic}, which introduced an optimal slashing function for DAPs.

In our design, for a system with $m$ DAPs, each DAP  stores a copy of the metadata.
Thus, any DAP can respond to the request and provide the complete metadata.
Besides, the response from DAP can be verified by the TSC, and the invalid response is considered to be a no-response.
For a specific data request, the sets $X= \left <x_1,x_2,..., x_m \right>$ represent the reply of DAP.
Let $x_i = 1$, if DAP $q_i$ sends a valid data response to TSC, and $x_i = 0$ otherwise. 
Since this design is a special case for~\cite{tas2023cryptoeconomic}, the slashing function for $q_i$ is defined as follows.
$$f_i(x) =
\begin{cases}
 0,    & i=1\\
 -C, &  \sum_{j=1}^m x_j < 1 ~and~ x_i = 0\\
 -W-\epsilon, & \sum_{j=1}^m x_j \ge 1 ~and~ x_i = 0\\
\end{cases}$$
The TSC slashes the collateral $C$ put up by each DAP that has not sent a valid response before the timeout if there is no valid response. 
Otherwise, if there is more than one response in the TSC before the timeout, it punishes the non-responsive DAPs by a modest amount, namely $W + \epsilon$, where $W$ is the cost for the DAP to construct and send a response to the TSC.
Note that $\epsilon$ is a small number to prevent the DAP from responding negatively to save the cost $W$.
}
\section{Security Analysis} \label{sec:analysis}
We analyze the security of \sysname, which satisfies the \textit{correctness} and \textit{Redeemability} properties.

\begin{lemma}\label{lemmap1g}
If a malicious sequencer forges an invalid current state $st_{h}^{\prime} \neq st_h$, the new state $st_{h+1}^{\prime}$ generated by the enclave will not be accepted by TSC.
\end{lemma}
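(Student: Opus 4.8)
The plan is to reduce the statement to two standard ingredients: collision resistance of the hash function $H(\cdot)$ and the correctness of the attested enclave program $prog$. First I would recall the acceptance rule that the TSC enforces in its \textsc{UpdateState} routine (Algorithm~\ref{alg:smartcontract}): a submitted state $S$ is recorded only if the contract is \emph{Active}, its height matches the on-chain height counter, its $preHash$ field equals the hash of the latest state currently stored on the main chain, and its QC verifies. Because the main chain provides finality and integrity (\secref{sec:model}) and \sysname maintains exactly one state per height (\secref{subsec:data}), that ``latest stored state'' is the genuine $st_h$. Hence the TSC accepts $st_{h+1}^{\prime}$ only if $st_{h+1}^{\prime}.preHash = H(st_h)$ (in addition to the height and QC checks).

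Second, I would pin down what the enclave actually emits. By the TEE functionality of \secref{sec:preliminary} together with the on-chain registration/attestation of \secref{subsec:register}, every enclave runs $prog$; on input a claimed current state $st_h^{\prime}$ and a transaction batch, $prog$ computes $st_{h+1}^{\prime} = execute(st_h^{\prime}, txs_{h+1})$ and, by the state format of \secref{subsec:data}, writes $H(st_h^{\prime})$ into the $preHash$ field of $st_{h+1}^{\prime}$. Thus the state ``generated by the enclave'' from the forged $st_h^{\prime}$ necessarily satisfies $st_{h+1}^{\prime}.preHash = H(st_h^{\prime})$.

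Combining the two steps: since $st_h^{\prime} \neq st_h$, collision resistance of $H(\cdot)$ gives $H(st_h^{\prime}) \neq H(st_h)$ except with negligible probability, so $st_{h+1}^{\prime}.preHash \neq H(st_h)$ and the required equality in \textsc{UpdateState} fails; the TSC rejects $st_{h+1}^{\prime}$. I would close by dispatching the corner cases. If $st_h^{\prime}$ lies at a height other than $h$, rejection already follows from the height check alone, independently of the hash argument. If a malicious host overwrites the $preHash$ field of the enclave's output with $H(st_h)$ after the fact, the enclave's signature no longer verifies on the altered state, so the accompanying QC is invalid and the state is rejected by the QC check (equivalently, the altered object is no longer ``the state generated by the enclave''); here I would invoke existential unforgeability of the signature scheme and the fact that a single compromised enclave cannot by itself assemble an $f+1$-signature QC. The main obstacle is making the second step fully rigorous — precisely formalizing that $prog$ binds $preHash$ to its input state and that enclave integrity (via attestation) together with signature unforgeability preserve this binding all the way to the on-chain check; the collision-resistance reduction itself is routine.
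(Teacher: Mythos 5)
Your proposal is correct and follows essentially the same route as the paper's own proof: the enclave binds $st_{h+1}^{\prime}.preHash$ to $H(st_{h}^{\prime})$, the TSC's \textsc{UpdateState} check compares this field against $H(st_h)$, and the mismatch forces rejection. You are more careful than the paper in two places — you explicitly invoke collision resistance of $H(\cdot)$ to conclude $H(st_h^{\prime}) \neq H(st_h)$ (the paper treats this as ``obvious''), and you dispatch the corner case of a host tampering with the enclave's output via signature unforgeability and the QC check — but these are refinements of the same argument, not a different approach.
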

\begin{proof}
According to the~\secref{subsec:normal}, the honest enclave generates the $st_{h+1}^{\prime}$ containing the hash of $st_{h}^{\prime}$, \ie, $st_{h+1}^{\prime}.preHash \leftarrow H(st_{h}^{\prime})$.
Thus, if the sequencer submits the $st_{h+1}^{\prime}$ signed by the enclave, TSC will check whether $st_{h+1}^{\prime}.preHash$ is equal to $H(st_{h})$.
Obviously, since  $st_{h}^{\prime} \neq st_h$, TSC will not accept the $st_{h+1}^{\prime}$.
\end{proof}

\begin{lemma}\label{lemmap2g}
If a malicious sequencer with a compromised TEE forges an invalid state $st_{h+1}^{\prime}$ with transaction list $txs_{h+1}^{\prime}$
(\ie, $st_{h+1}^{\prime}$ cannot be calculated by executing transactions $txs_{h+1}^{\prime}$ from the initial state $st_{h}$), TSC will not accept the state.
\end{lemma}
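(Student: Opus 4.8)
The plan is to argue that any state $st_{h+1}^{\prime}$ accepted by the TSC must carry a valid QC, and that a valid QC requires at least $f+1$ signatures; since at most $f$ TEEs are compromised (threat model, \secref{subsec:threatmodel}), at least one signature in the QC comes from an uncompromised enclave. I would then invoke the integrity guarantee of an uncompromised TEE, namely that $verify(\sigma, eid, prog, outp) = 1$ only for outputs $outp$ that are the genuine result of running $prog$ on the supplied input (the TEE model in \secref{sec:preliminary}). Because every registered enclave runs the same program $prog$ (guaranteed by the on-chain attestation during \textsf{Register}, \secref{subsec:register}), an uncompromised enclave will only sign $st_{h+1}^{\prime}$ if $st_{h+1}^{\prime} = execute(st_{h}, txs_{h+1}^{\prime})$, contradicting the hypothesis that $st_{h+1}^{\prime}$ is not obtainable from $st_h$ and $txs_{h+1}^{\prime}$.

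Concretely, the steps I would carry out in order are: (1) Recall from the \textsf{UpdateState} check (Algorithm~\ref{alg:smartcontract}) and \secref{subsec:normal} that the TSC records $st_{h+1}^{\prime}$ only if its QC passes \textsf{VerifyQC}, and by the voting strategy fixed in \secref{subsec:register} a valid QC contains signatures from more than $f$ committee members. (2) Apply a counting argument: among these $\ge f+1$ signers, since at most $f$ enclaves are compromised, at least one signer $p_j$ has an uncompromised enclave $\eta_j$. (3) Note that the vote $v_j = \langle H(st_{h+1}^{\prime}), pk_j, \sigma_j\rangle$ is produced inside $\eta_j$, so $\sigma_j$ endorses $st_{h+1}^{\prime}$ as the output of $prog$ on input $(st_h, txs_{h+1}^{\prime})$; by the attestation done at registration, $\eta_j$ indeed runs $prog$, and the honest program only emits a state it computed via $execute(\cdot)$ on the latest on-chain state $st_h$. (4) Conclude $st_{h+1}^{\prime} = execute(st_h, txs_{h+1}^{\prime})$, contradicting the assumption, so no such QC can form and the TSC does not accept $st_{h+1}^{\prime}$. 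I would also note the companion fact from Lemma~\ref{lemmap1g} that the \textsf{preHash} check pins the predecessor to the genuine $st_h$, so the uncompromised enclave's input really is the on-chain $st_h$ and not a forged $st_h^{\prime}$.

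The main obstacle I anticipate is making the "uncompromised enclave only signs correct outputs" step fully rigorous: it relies on (a) the TEE functional model from \secref{sec:preliminary} being interpreted as unforgeability of endorsements for uncompromised enclaves, (b) the registration/attestation argument that every committee enclave loaded exactly $prog$, and (c) the honest-program specification that, upon receiving a candidate $st_{h+1}^{\prime}$ with transactions $txs_{h+1}^{\prime}$, the program recomputes $execute(st_h, txs_{h+1}^{\prime})$ against the latest TSC state and refuses to vote unless the recomputation matches. I would state (c) explicitly as the behavior of $prog$ in the voting step of \secref{subsec:normal}, so that the lemma reduces cleanly to the counting argument plus TEE integrity. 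A secondary, minor subtlety is ruling out that the $f$ compromised enclaves alone could assemble a QC — but that is exactly excluded by the $f+1$ threshold, so it needs only a one-line remark.
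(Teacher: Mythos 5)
Your proposal is correct and follows essentially the same route as the paper's own proof: both arguments combine the observation that an uncompromised enclave (whose correct loading of $prog$ is guaranteed by registration-time attestation, and whose input is pinned to the genuine $st_h$ by Lemma~\ref{lemmap1g}) will only sign the true result of $execute(st_h, txs_{h+1}^{\prime})$, with the counting fact that at most $f$ compromised enclaves cannot reach the $f+1$ signatures needed for a valid QC. Your contrapositive phrasing (a valid QC forces an uncompromised signer, contradiction) versus the paper's direct phrasing (the forged state gathers at most $f$ signatures) is only a cosmetic difference.
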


\begin{proof}
By Lemma~\ref{lemmap1g}, the sequencer with the protected TEE cannot make the enclave output a forged state by providing the forged input. 
If the sequencer provides the correct input $st_{h}$, the enclave will only generate $st_{h+1}$ by executing $txs_{h+1}^{\prime}$.
This is because the program in the enclave is protected from execution.
Thus, the malicious sequencer with protected TEE cannot generate the forged state $st_{h+1}^{\prime}$.

According to~\secref{sec:model}, \sysname has $n$ sequencers, and up to $f$ TEE can be compromised.
Thus, the $st_{h+1}^{\prime}$ can only be signed by at most $f$ compromised enclaves of sequencers, \ie, it will not be accepted by TSC.
\end{proof}

\begin{theorem}[{\textit{Correctness}}]Malicious sequencers cannot upload an incorrect state accepted by the TSC.

\end{theorem}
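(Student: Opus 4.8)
The plan is to obtain the theorem as a direct corollary of Lemmas~\ref{lemmap1g} and~\ref{lemmap2g}, by enumerating every way a (possibly colluding group of) malicious sequencer(s) could ever hold a signed invalid state and showing that each route is blocked by an on-chain check in \textsc{UpdateState} (Algorithm~\ref{alg:smartcontract}). First I would make precise what an \emph{invalid} state means relative to the latest state $st_h$ recorded on the TSC at height $h$: a submitted state $st_{h+1}^{\prime}$ is invalid if either its predecessor hash does not equal $H(st_h)$, or $st_{h+1}^{\prime} \neq execute(st_h, txs_{h+1}^{\prime})$ for the transaction batch $txs_{h+1}^{\prime}$ it carries. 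Recall that \textsc{UpdateState} records $st_{h+1}^{\prime}$ only if (a) its height equals $h+1$, (b) $st_{h+1}^{\prime}.preHash = H(st_h)$, and (c) \textsf{VerifyQC} succeeds, i.e., the accompanying QC carries at least $f+1$ signatures that verify against the public keys registered on the MSC (see~\secref{subsec:register}).

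Next I would split on the origin of the signatures carried by $st_{h+1}^{\prime}$. A signature from an uncompromised enclave can only have been produced by the \sysname program $prog$ via $resume$ on whatever input the host supplied; since $prog$ sets the new state's $preHash$ to the hash of the state it is given and deterministically outputs $execute(\cdot,\cdot)$, an uncompromised enclave either (i) was fed a forged predecessor $st_h^{\prime} \neq st_h$, in which case Lemma~\ref{lemmap1g} shows the resulting state fails check (b); or (ii) was fed the true $st_h$, in which case its output equals $execute(st_h, txs_{h+1}^{\prime})$ and is therefore \emph{not} invalid. Hence no uncompromised enclave ever endorses an invalid state, and a malicious host (Case~3) cannot forge such an endorsement because $verify(\sigma,eid,prog,\cdot)$ is bound to the TEE key (see~\secref{sec:preliminary}), nor can it forge signatures under other sequencers' keys. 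Consequently every signature appearing on an invalid state comes from one of the at most $f$ compromised TEEs (static corruption, see~\secref{subsec:threatmodel}), so the QC contains at most $f < f+1$ signatures and \textsf{VerifyQC} rejects it — which is exactly the counting argument of Lemma~\ref{lemmap2g}.

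Finally I would combine the two observations: if a malicious sequencer submits an invalid state, then either it was built on a forged predecessor, so check (b) fails by Lemma~\ref{lemmap1g}, or its QC gathers at most $f$ signatures, so check (c) fails by Lemma~\ref{lemmap2g}; in every case \textsc{UpdateState} does not record $st_{h+1}^{\prime}$. Since the TSC only ever stores states admitted by \textsc{UpdateState}, no invalid state is ever accepted, which is the claim.

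I expect the main obstacle to be the completeness of the case split: arguing rigorously that a Byzantine host gains nothing beyond feeding arbitrary inputs to an otherwise intact enclave, so that the only non-trivial signatures possible on an invalid state are those of the compromised TEEs, and that the bound $f$ on their number indeed holds at the moment of submission under static corruption. A secondary subtlety is collusion — several malicious sequencers together still control at most $f$ compromised enclaves — so I would make sure the proof leans on the $f+1$ threshold enforced by \textsf{VerifyQC} rather than on the committee size $n$.
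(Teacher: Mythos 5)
Your proposal is correct and follows essentially the same route as the paper: the theorem is obtained directly by combining Lemma~\ref{lemmap1g} (a forged predecessor makes the \textit{preHash} check fail) with Lemma~\ref{lemmap2g} (an invalid state can gather at most $f$ signatures, so \textsf{VerifyQC} rejects it). Your write-up is considerably more explicit than the paper's one-line citation of the two lemmas — in particular the careful case split on signature origin and the collusion remark — but the underlying argument is identical.
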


\begin{proof}

{
We consider two cases for a malicious sequencer:

\bheading{Case 1: Malicious sequencer with uncompromised TEE}.  
If the sequencer sends the correct current state \( st_h \) to the enclave, the enclave will execute the transactions correctly and generate the correct next state \( st_{h+1} \). 
Thus, the malicious sequencers can only generate the incorrect next state \( st_{h+1}' \) by forging the current state \( st_h' \).
By Lemma~\ref{lemmap1g}, the incorrect next state \( st_{h+1}' \) will be rejected by TSC.

\bheading{Case 2: Malicious sequencer with compromised TEE}.  
By Lemma~\ref{lemmap2g}, if a sequencer forges \( st_{h+1}' \) with its enclave's signature, TSC will also reject the forged state.

In both cases, it is impossible for a malicious sequencer to forge an incorrect state accepted by the TSC.
}
\end{proof}

\begin{theorem}[{\textit{Redeemability}}] If a client submits a transaction $tx$ to sequencers, the transaction will finally be executed or the TSC will be settled (\ie, the deposit of all clients can be redeemed in the main chain). 
\end{theorem}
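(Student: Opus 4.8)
The plan is to trace the lifecycle of the client's transaction $tx$ and perform a case analysis over every branch it can follow in the protocol, showing that each branch ends either with $tx$ executed or with the TSC frozen in a state that still lets every client redeem its deposit. First I would dispatch the easy branch: if some honest sequencer ever collects a QC for a state $st_{h+1}$ whose batch $txs_{h+1}$ includes $tx$ and submits it through \textsc{UpdateState}, then the \textsc{UpdateState} guard passes, $st_{h+1}$ is recorded on $\mathcal{M}$, and $tx$ has been executed.

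Next I would treat the branch where $tx$ is never processed in the normal case. Here the client invokes \textsc{StartChallenge}$(M_{addr}, tx, pledge)$, which registers the challenge with its start timestamp $Chal_{id}.startChal$ against which the $\tau$-timeout is later checked, and posts $tx$ on-chain so it is publicly visible. I would then split on whether the sequencers respond in time. If they do, via \textsc{ResolveChallenge}$(id, QC)$ with a valid QC, I would argue, using the enclave protocol (an enclave signs the challenged transaction only after including it in its next batch and executing it) together with the already-established \textit{Correctness} theorem (a QC of $f+1$ signatures certifies a genuine execution result, since at most $f$ TEEs are compromised), that $tx$ has indeed been executed. If no valid response arrives before the timeout, then \textsc{SettleRollup}$(id)$ succeeds because $block.time - Chal_{id}.startChal > \tau$, and \texttt{ContractState} becomes \texttt{Frozen}; from that point on, further state updates and new deposits are rejected.

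The remaining and most delicate part is showing that, once the TSC is frozen, every client can redeem its balance on $\mathcal{M}$. I would use the fact that the last recorded state $st_h$ carries the account-tree root $R_h$, so a client owning account $a$ with balance $b$ only needs the Merkle witness $\delta \leftarrow st_h.R_h.proof(a,b)$ together with a signature under the key of $a$ (which, per \secref{subsec:normal}, coincides with the client's main-chain key) to clear the \textsf{Verify} and \textsf{VerifyProof} checks inside \textsc{SettleWithdraw} and obtain the refund of $b$. I would close the argument by observing that a deposit that was locked on the TSC but never confirmed is refunded to its owner once its own $\tau$-timer expires, so no funds are left stranded in any branch.

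The main obstacle is justifying that the witness $\langle b, \delta\rangle$ is actually obtainable at settlement time: this is not a guarantee of the smart contract but of the DAP layer, so the argument must lean on \secref{subsec:dap} --- each DAP keeps a full copy of the metadata, and under the rational-DAP assumption the slashing function $f_i$ makes withholding a valid data response strictly dominated, so at least one DAP answers. A secondary subtlety I would need to pin down is that a malicious sequencer cannot ``resolve'' a challenge without genuinely executing $tx$, \ie, that the QC accepted by \textsc{ResolveChallenge} is semantically bound to the challenged $tx$ rather than to some unrelated state; this again reduces to the TEE model together with the \textit{Correctness} theorem.
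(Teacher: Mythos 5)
Your proposal is correct and follows essentially the same route as the paper's own proof: a case split on whether the sequencers resolve the client's challenge within the timeout $\tau$, with execution guaranteed in the first case and settlement (freezing of the TSC followed by on-chain redemption) in the second. You are in fact more careful than the paper in the settlement branch --- the paper asserts that ``any client can redeem their deposit'' without addressing how the Merkle witness $\left<b, \delta\right>$ is obtained, whereas you correctly trace this dependency to the DAP layer and the slashing incentives of \secref{subsec:dap}, and you also make explicit the binding between the QC accepted by \textsc{ResolveChallenge} and the challenged transaction, both of which are genuine gaps the paper's one-paragraph proof leaves implicit.
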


\begin{proof}
According to the challenge mechanism, the client can initiate a challenge if its transaction is not executed for a long time. 
Once a challenge is initiated, there are two cases as follows.
\begin{packeditemize}
    \item \textbf{Case 1.} Sequencers resolve the challenge before the waiting time $\tau$ passes.
    In this case, at least $f+1$ enclaves of sequencers have received the transaction $tx$.
    Thus, at least one protected enclave will batch the transaction and execute it to generate a new state.
    
    \item \textbf{Case 2.} Sequencers fail to resolve the challenge before the waiting time $\tau$ passes.
    As a result, the client can invoke the \textsc{Settle}($id$) to settle the TSC.
    TSC is frozen immediately, and any client can redeem their deposit on TSC.
\end{packeditemize}

\end{proof}

\section{Performance Evaluation}\label{sec:evaluate}
We evaluate \sysname in terms of on-chain verification costs, throughput, and off-chain transaction processing latency.
We consider two state-of-the-art counterparts, StarkNet~\cite{starknet}, Scroll~\cite{scroll}, Optimism~\cite{optimism}, and Arbitrum~\cite{arbitrum}.

\subsection{System Implementation and Setup}\label{sec:imple}
We build a prototype of \sysname with Golang and develop the smart contracts using Solidity 0.8.0~\cite{solidity}. 
We deploy the smart contracts on the Ethereum test network, Sepolia\footnote{https://github.com/eth-clients/sepolia/}.
Our experiments use mainstream off-the-shelf TEE devices, including Intel SGX~\cite{mckeen2013innovative}, Intel TDX~\cite{tdx}, and Hygon CSV~\cite{csv}. 
We deployed our evaluation on up to 20 nodes, with the distribution of SGX, TDX, and CSV distributed in a ratio of $1:2:2$.
Specifically, nodes with SGX, TDX, and CSV run on Aliyun ECS g7t.2xlarge with 8vCPU (Intel®Xeon), 8i.2xlarge with 8vCPU (Intel®Xeon), and g7h.2xlarge with 8vCPU (Hygon C86-3G 7390), respectively.
We consider two deployment scenarios: local area network (LAN) and wide area network (WAN). 
For WAN experiments, the inter-node Round-Trip Time (RTT) is about 25 ± 0.1ms. 
For LAN experiments, the inter-node RTT is about 0.5 ± 0.03ms.

\begin{table}[t]
    \centering
    \setlength{\abovecaptionskip}{0cm}
    \caption{\textbf{System Parameters.}} \label{table:parameter} 
    \resizebox{\linewidth}{!}{
    \begin{tabular}{@{}lll@{}}
        \toprule[1pt]
        ~~~~~\textbf{Parameters}    & ~~~~~~~\textbf{Values}  \\
        \midrule
        Number of sequencers     & 5, {10}, 15, 20 \\
        Batch size for processing transactions   & 500, 1000, 1500, {2000}     \\
         Number of  SGX sequencers     & 1, {2}, 3, 4 \\
        Number of  TDX sequencers     & 2, {4}, 6, 8 \\
        Number of  CSV  sequencers   & 2, {4}, 6, 8 \\
        \bottomrule[1pt]
    \end{tabular}
    }
\end{table}

We vary parameters in Table~\ref{table:parameter} to evaluate system gas cost, throughput, and latency. 
The experiment evaluates on-chain costs, as well as the throughput and latency of off-chain transaction processing.
Specifically, gas cost measures the expense of executing TSC functions, throughput quantifies the number of transactions handled by sequencers per second, and latency reflects the time required for sequencers to vote and process transactions.

\subsection{On-Chain Cost of Functions} \label{subsec:cost}
The on-chain execution costs of \sysname are measured in \textit{gas}, which is a unit that measures the computational effort required to execute operations on Ethereum~\cite{ethereum}.
The gas consumption depends on the transaction size and execution costs of smart contracts. 
To make these costs more understandable, we convert gas into USD with a gas price\footnote{https://etherscan.io/chart/gasprice} of 19.26 Gwei ($1.926\times10^{-8}$ ETH) and an ETH price\footnote{https://coinmarketcap.com/currencies/ethereum/} of 3376.77 USD on January 1, 2025.

\begin{table}[t]
\centering
\small
\begin{threeparttable}
\caption{\textbf{Protocol Execution Cost.} 
Execution costs are estimated based on the prices (ETH/USD 3376.77 and gas price is 19.26 Gwei) on Jan. 1, 2025}.
\label{table:evalution}
\begin{tabular}{p{5cm}rr}
\toprule

\multirow{2}*{\textbf{~~~~~Methods}}    & \multicolumn{2}{c}{\textbf{Cost}}\\
                          &\textbf{Gas}~~~~~~~&\textbf{USD}\\
\midrule
        \textsc{Deposit}~~~~~~           & 48551~~~~         & 3.16             \\
        \textsc{UpdateState}~~~~~~       & 156,263~~~~        & 10.16     \\
        \textsc{StartChallenge}~~~~~~    & 47,118~~~~        & 3.06       \\
        \textsc{ResolveChallenge}~~~~~~  & 146,618~~~~        & 9.53       \\
        \textsc{SettleRollup}~~~~~~      & 29,078~~~~       & 1.89     \\
        \textsc{SettleWithdraw}~~~~~~    & 124,511~~~~       & 8.10      \\
\midrule
        \textbf{Simple ETH transfer} & 21,000~~~~      & 1.37      \\
\bottomrule
\end{tabular}
\end{threeparttable}
\end{table}

Table~\ref{table:evalution} shows our excellent performance in terms of on-chain cost.
We measure the complete process of deposit, state update, and challenge on Ethereum when the number of sequencers is 10  (see~\secref{sec:design}) and the batch size for processing transactions is 2000.
The \textbf{\textsc{UpdateState}} is one of the most gas-consuming methods, which costs about 156K gas (10.16 USD). 
That is because it includes the verification of the QC and the storage of the new state on-chain.
However, since the cost is amortized over 2000 transactions, the fee per transaction is reduced to just 78 gas (0.005 USD), which is significantly lower than the gas consumption of a simple ETH transfer (1.37 USD).

In terms of the challenge, a challenge-resolve process consists of the \textbf{\textsc{StartChallenge}} and \textbf{\textsc{ResolveChallenge}}, which totally require 194K gas (12.59 USD).
The \textbf{\textsc{SettleRollup}} and \textbf{\textsc{SettleWithdraw}}, which are crucial for the redemption of the client, totally consume 154K gas (9.99 USD).
However, this only occurs when the challenge fails.
In normal cases, the client's withdrawal only requires a transfer in \sysname, which costs 78 gas (0.005 USD).

\subsection{Performance Evaluation}\label{subsec:performance}
We evaluate the performance of \sysname by measuring its throughput and latency under varying numbers of sequencers and batch sizes. 
We also evaluate the impact of the introduced TEE by comparing throughput and latency with sequencers enabled and disabled with TEE.

\bheading{Throughput and latency.}
\figref{fig:performance} illustrates the throughput and latency of \sysname across varying numbers of sequencers and batch sizes in both LAN and WAN environments.
In the LAN environment, as the number of sequencers increases, throughput experiences a slight decrease, while latency rises. 
This behavior is primarily due to the increased messaging overhead among sequencers as their number grows. 
Throughput improves with larger batch sizes, as they enable \sysname to process more transactions concurrently. 
For instance, with 5 sequencers and a batch size of 2000, throughput can reach about 28 KTPS. 
However, larger batch sizes also result in higher latency, which peaks at 119 ms. 
This increase in latency occurs because sequencers need more time to process the larger volume of transactions, leading to longer delays.

In the WAN environment, throughput and latency exhibit similar trends to those in the LAN environment. 
However, due to the higher communication delays inherent in WAN, the throughput is lower and the latency is higher compared to the LAN setting. 
Furthermore, as batch size increases, the rise in latency in the WAN environment is less pronounced than in the LAN environment. 
This difference occurs because, in the WAN, the primary factor influencing latency shifts from transaction processing delays to communication delays.

\begin{figure}[t]
	\centering
	\begin{subfigure}{0.48\linewidth}
		\centering
		\includegraphics[width=1.1\linewidth]{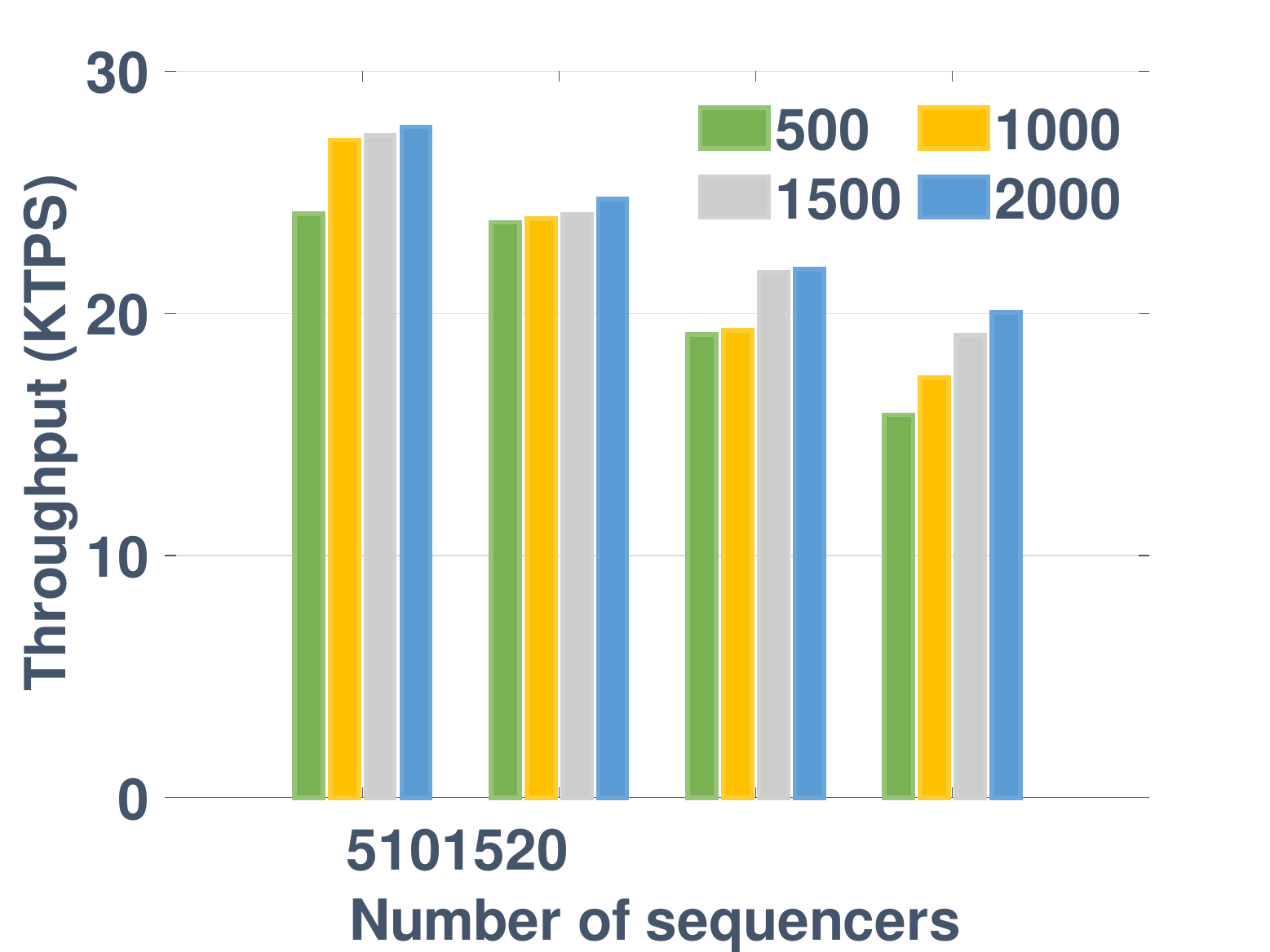}
		\caption{Throughput in LAN}
		\label{fig:throughputLAN}
	\end{subfigure}
	\hfill
	\begin{subfigure}{0.48\linewidth}
		\centering
		\includegraphics[width=1.1\linewidth]{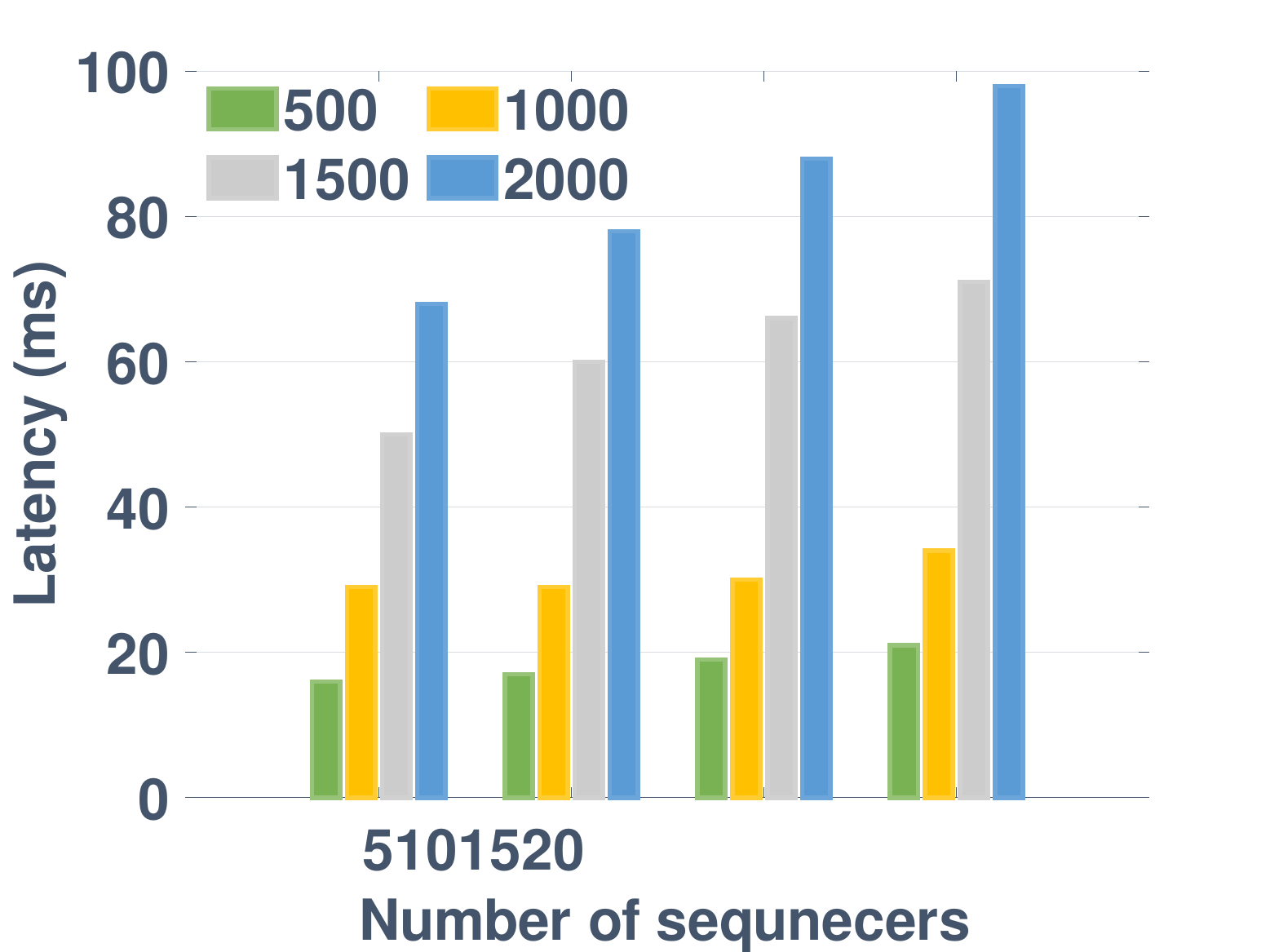}
		\caption{Latency in LAN}
		\label{fig:latencyLAN}
	\end{subfigure}
        \hfill
	\begin{subfigure}{0.48\linewidth}
		\centering
		\includegraphics[width=1.1\linewidth]{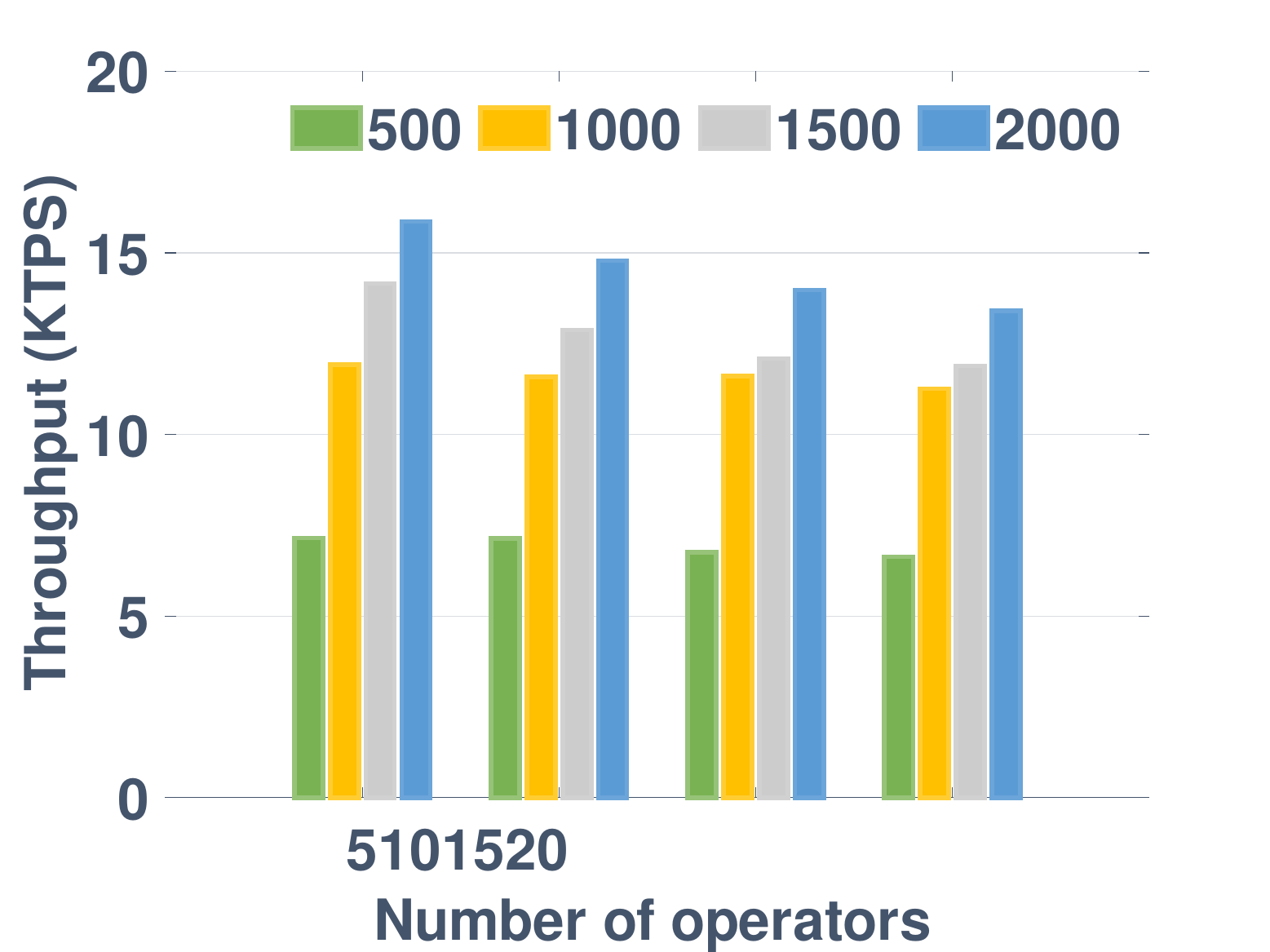}
		\caption{Throughput in WAN}
		\label{fig:throughputWAN}
	\end{subfigure}
	\hfill
	\begin{subfigure}{0.48\linewidth}
		\centering
		\includegraphics[width=1.1\linewidth]{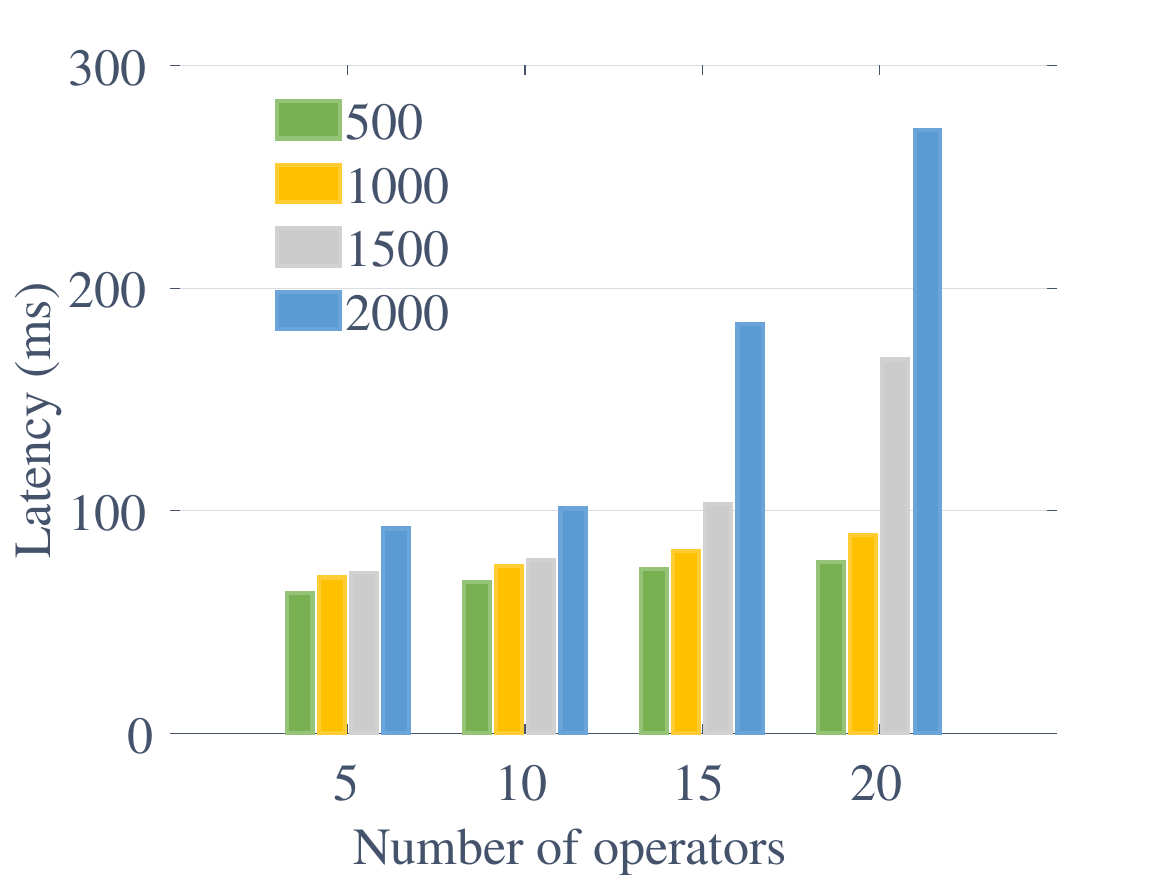}
		\caption{Latency in WAN}
		\label{fig:latencyWAN}
	\end{subfigure}
	\caption{Transaction processing performance with varying numbers of sequencers and batch size in LAN and WAN.}
	\label{fig:performance}
\end{figure}

\bheading{TEE overhead.}
\figref{fig:sev} shows the influence of introducing TEE on throughput and latency in LAN and WAN, respectively.
The operations executed inside a TEE need encryption and to switch context from the regular execution environment of the host machine, which degrades performance.
Thus, the protection provided by TEEs reduces throughput and increases latency.
For example, in the WAN environment with 20 sequencers, throughput decreases by 15.74\%, while latency increases by 20.17\%.
Importantly, even with TEEs' hardware protection in WAN, the reduction in performance does not exceed the acceptable range for users, maintaining a fine balance between security and performance.

\begin{figure}[t]
	\centering
	\begin{subfigure}{0.49\linewidth}
		\centering
		\includegraphics[width=1.1\linewidth]{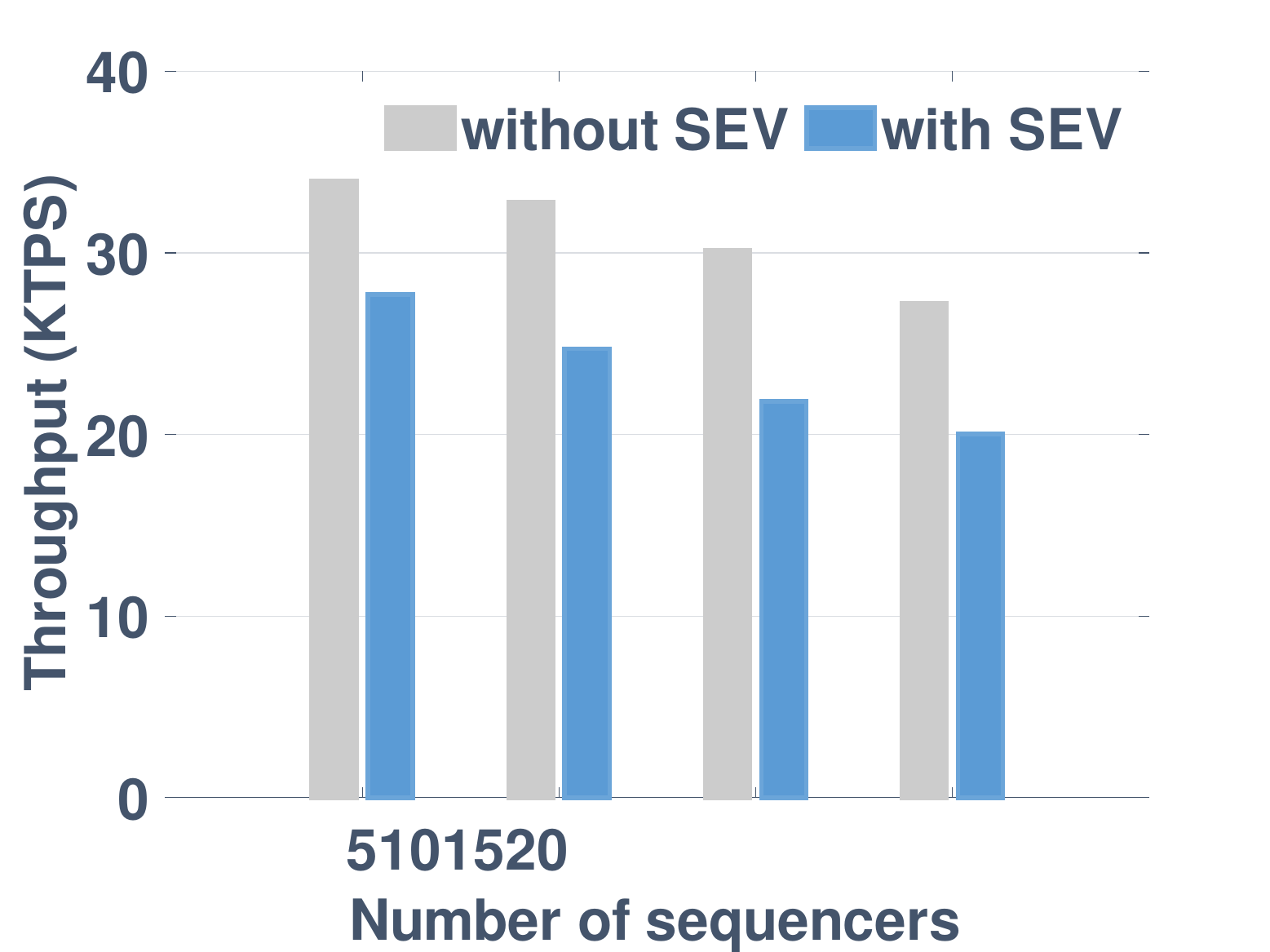}
		\caption{Throughput in LAN}
		\label{fig:throughputnodeLAN}
	\end{subfigure}
	\hfill
	\begin{subfigure}{0.49\linewidth}
		\centering
		\includegraphics[width=1.1\linewidth]{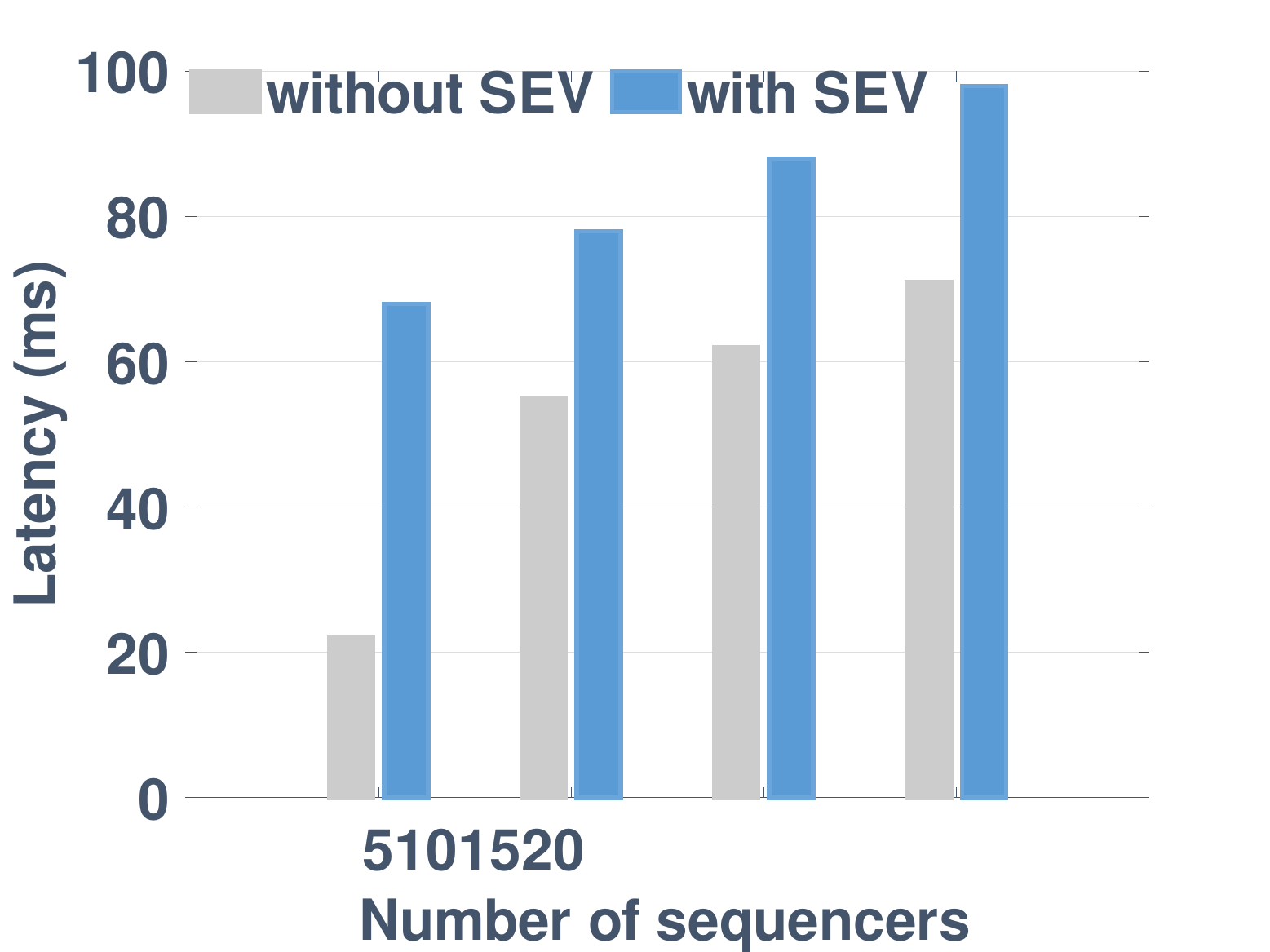}
		\caption{Latency in LAN}
		\label{fig:lantencynodeLAN}
	\end{subfigure}
	\hfill
	\begin{subfigure}{0.49\linewidth}
		\centering
		\includegraphics[width=1.1\linewidth]{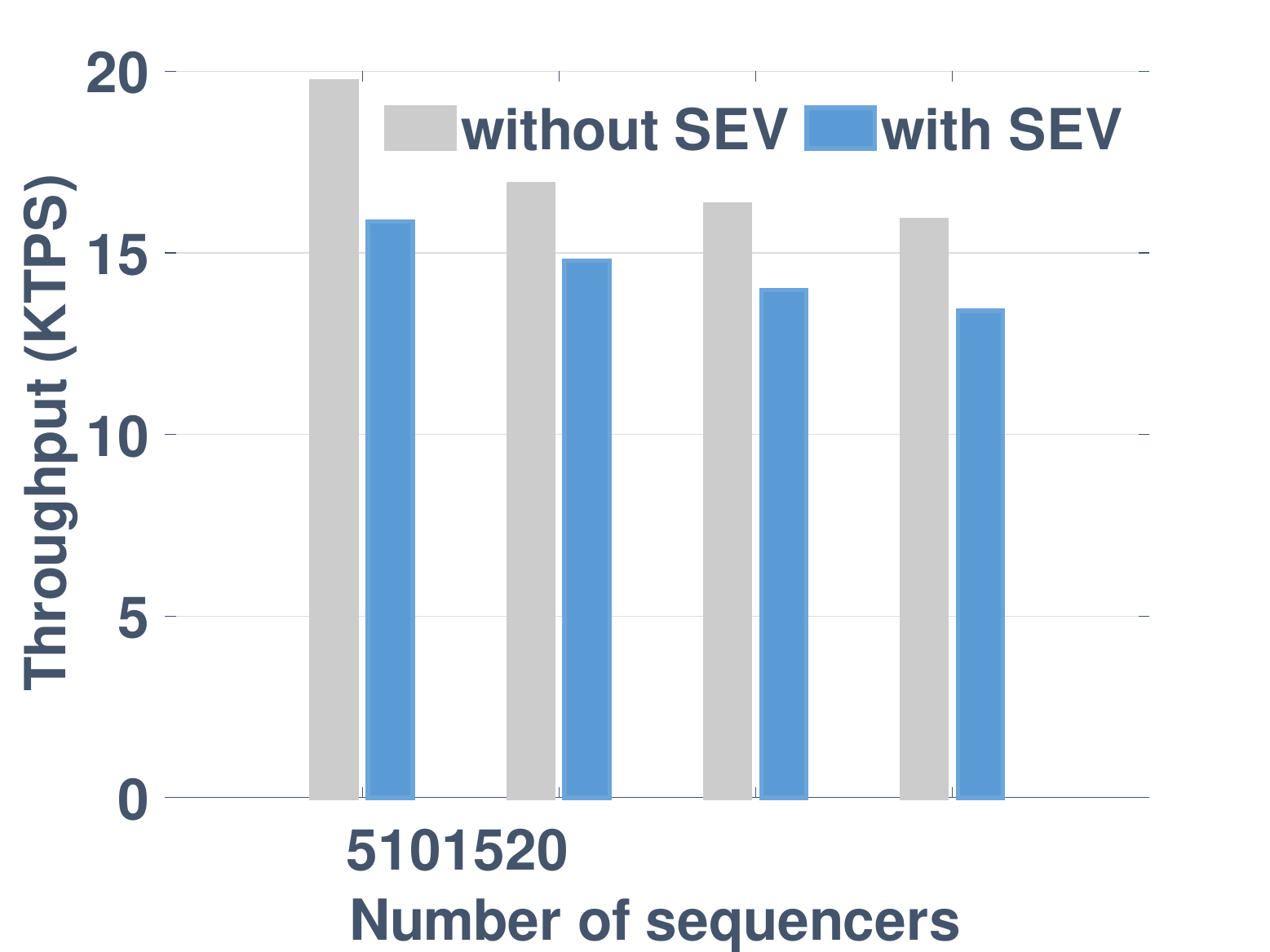}
		\caption{Throughput in WAN}
		\label{fig:throughputnodeWAN}
	\end{subfigure}
	\hfill
	\begin{subfigure}{0.49\linewidth}
		\centering
		\includegraphics[width=1.1\linewidth]{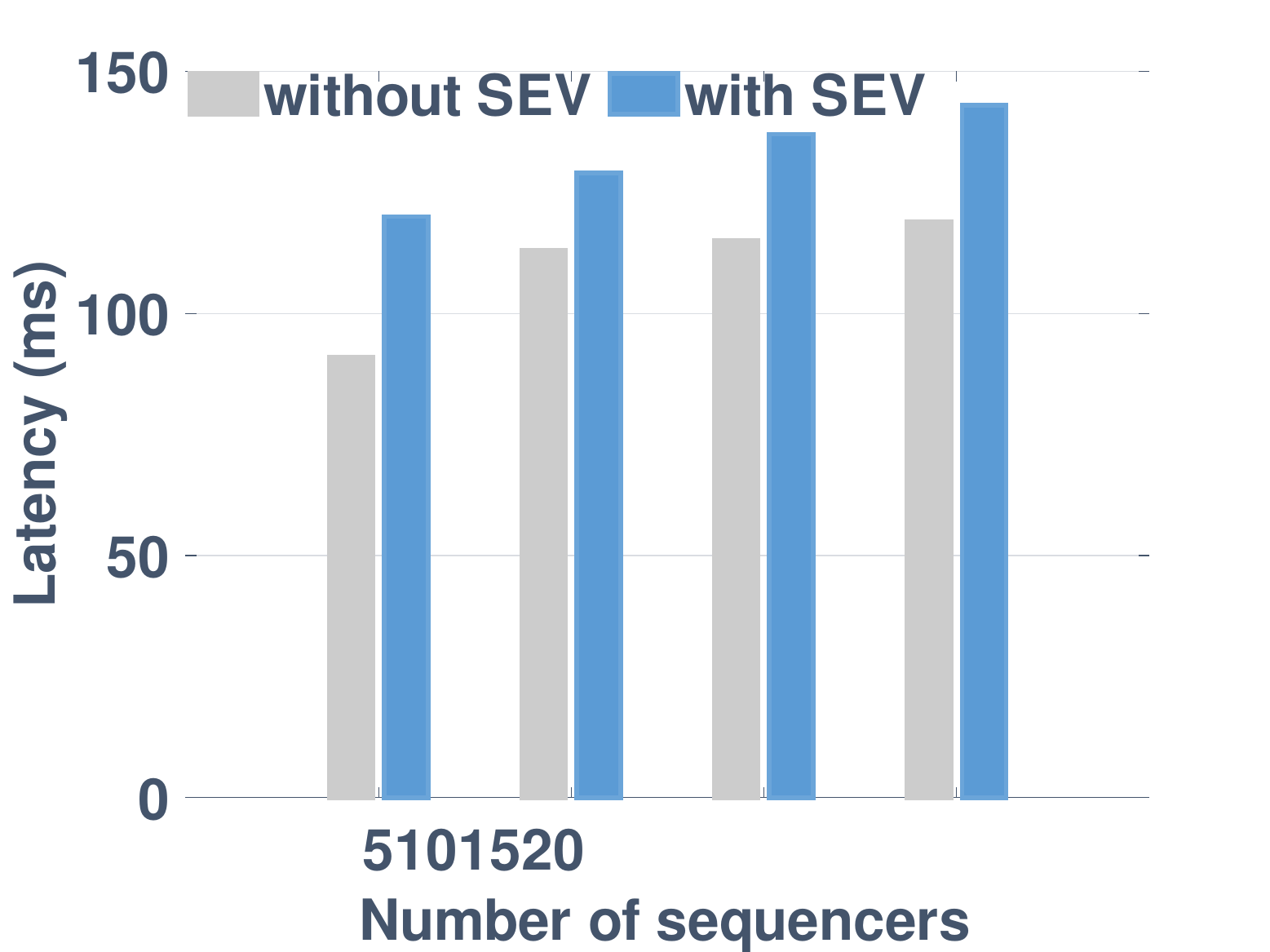}
		\caption{Latency in WAN}
		\label{fig:lantencynodeWAN}
	\end{subfigure}
	\caption{System performance with/without SEV in LAN and WAN.}
	\label{fig:sev}
\end{figure}

\subsection{Comparison with Counterparts}\label{sec:compare}
We conduct a comparison of \sysname against prominent rollup solutions:  StarkNet~\cite{starknet}, Scroll~\cite{scroll}, Optimism~\cite{optimism}, and Arbitrum~\cite{arbitrum}.
The average transaction fees for StarkNet and Scroll over the past month were approximately 0.043 USD and 0.114 USD, respectively~\cite{l2beat}. 
The average transaction fees for Optimism and Arbitrum were significantly lower, at 0.012 USD and 0.005 USD, respectively. 
As shown in~\secref{subsec:cost}, the average transaction fee for \sysname is 0.006 USD per transaction with a batch size of 2000 (matching that of StarkNet), representing an 86\% reduction compared to StarkNet.
Thus, the transaction fee in \sysname is comparable to those of optimistic rollups.
As for throughput, current rollup solutions are constrained by the Ethereum mainnet, which can support up to 3,000 TPS.  
Section~\ref{subsec:performance} demonstrates that \sysname's off-chain processing can achieve throughput exceeding 5000 TPS.
Regarding withdrawal times, StarkNet and Scroll offer withdrawals within a few minutes, while Optimism and Arbitrum have a withdrawal period of one week. 
Although \sysname must undergo a challenge process when TEEs are unavailable, the withdrawal time during the normal case remains a few minutes.

\section{Conclusion}\label{sec:conclusion}
We introduce \sysname, a high-performance and cost-effective rollup solution that leverages TEEs to optimize the gas cost and withdraw delay of rollup. 
\sysname employs a group of sequencers, safeguarded by heterogeneous TEEs, to process transactions off-chain while leveraging DAPs to ensure data availability. 
\sysname also adopts a cost-efficient challenge mechanism to ensure redeemability, even in scenarios where all sequencers become unavailable.
Additionally, \sysname features a laziness penalty mechanism to encourage DAPs to work diligently. 
The prototype implementation of \sysname demonstrates promising results, outperforming state-of-the-art approaches in both on-chain cost efficiency and off-chain processing speed.

\bibliographystyle{IEEEtran}

\bibliography{reference_long}

\begin{thebibliography}{10}
\providecommand{\url}[1]{#1}
\csname url@samestyle\endcsname
\providecommand{\newblock}{\relax}
\providecommand{\bibinfo}[2]{#2}
\providecommand{\BIBentrySTDinterwordspacing}{\spaceskip=0pt\relax}
\providecommand{\BIBentryALTinterwordstretchfactor}{4}
\providecommand{\BIBentryALTinterwordspacing}{\spaceskip=\fontdimen2\font plus
\BIBentryALTinterwordstretchfactor\fontdimen3\font minus \fontdimen4\font\relax}
\providecommand{\BIBforeignlanguage}[2]{{%
\expandafter\ifx\csname l@#1\endcsname\relax
\typeout{** WARNING: IEEEtran.bst: No hyphenation pattern has been}%
\typeout{** loaded for the language `#1'. Using the pattern for}%
\typeout{** the default language instead.}%
\else
\language=\csname l@#1\endcsname
\fi
#2}}
\providecommand{\BIBdecl}{\relax}
\BIBdecl

\bibitem{nakamoto2008bitcoin}
S.~Nakamoto, ``Bitcoin: a peer-to-peer electronic cash system,'' 2008.

\bibitem{10316333}
W.~Jie, W.~Qiu, A.~S. Voundi~Koe, J.~Li, Y.~Wang, Y.~Wu, J.~Li, and Z.~Zheng, ``A secure and flexible blockchain-based offline payment protocol,'' \emph{IEEE Transactions on Computers}, vol.~73, pp. 408--421, 2024.

\bibitem{9732238}
Y.~Liu, X.~Hao, W.~Ren, R.~Xiong, T.~Zhu, K.-K.~R. Choo, and G.~Min, ``A blockchain-based decentralized, fair and authenticated information sharing scheme in zero trust internet-of-things,'' \emph{IEEE Transactions on Computers}, vol.~72, pp. 501--512, 2023.

\bibitem{dinh2018untangling}
T.~T.~A. Dinh, R.~Liu, M.~Zhang, G.~Chen, B.~C. Ooi, and J.~Wang, ``Untangling blockchain: A data processing view of blockchain systems,'' \emph{IEEE transactions on knowledge and data engineering}, vol.~30, pp. 1366--1385, 2018.

\bibitem{starkex}
S.~Community, ``Stark{E}x: A layer-2 scalability engine, live on {E}thereum mainnet,'' \url{https://starkware.co/starkex££/}.

\bibitem{optimism}
``Optimism,'' \url{https://www.optimism.io}, 2024.

\bibitem{kotzer2024sok}
A.~Kotzer, D.~Gandelman, and O.~Rottenstreich, ``So{K}: Applications of sketches and rollups in {B}lockchain networks,'' \emph{IEEE Transactions on Network and Service Management}, vol.~21, pp. 1932--4537, 2024.

\bibitem{l2beat}
``Upgradeability of {E}thereum {L}2s,'' \url{https://l2beat.com/scaling/tvl/}, 2025.

\bibitem{arbitrum}
``Arbitrum: Secure scaling for {E}thereum,'' \url{https://arbitrum.io/}, 2025.

\bibitem{starknet}
S.~Community, ``Starknet,'' \url{https://www.starknet.io/en/}.

\bibitem{scroll}
``Scroll: Native zk{EVM} layer 2 for {E}thereum,'' \url{https://scroll.io}, 2025.

\bibitem{zksnark}
``Awesome {Z}ero {K}nowledge {P}roofs.'' \url{https://github.com/matter-labs/awesome-zero-knowledge-proofs}.

\bibitem{li2021cipherleaks}
M.~Li, Y.~Zhang, H.~Wang, K.~Li, and Y.~Cheng, ``Cipherleaks: Breaking constant-time cryptography on {AMD} {SEV} via the ciphertext side channel,'' in \emph{USENIX Security Symposium}, 2021.

\bibitem{li2019exploiting}
M.~Li, Y.~Zhang, Z.~Lin, and Y.~Solihin, ``Exploiting unprotected {I/O} operations in {AMD}’s secure encrypted virtualization,'' in \emph{USENIX Security Symposium}, 2019.

\bibitem{li2021crossline}
M.~Li, Y.~Zhang, and Z.~Lin, ``Crossline: Breaking ``security-by-crash" based memory isolation in {AMD} {SEV},'' in \emph{Proceedings of the 2021 ACM SIGSAC Conference on Computer and Communications Security}, 2021.

\bibitem{mckeen2013innovative}
F.~McKeen, I.~Alexandrovich, A.~Berenzon, C.~V. Rozas, H.~Shafi, V.~Shanbhogue, and U.~R. Savagaonkar, ``Innovative instructions and software model for isolated execution.'' \emph{Hasp@ isca}, vol.~10, no.~1, 2013.

\bibitem{tdx}
``Intel {T}rust {D}omain {E}xtensions,'' \url{https://www.intel.com/content/www/us/en/developer/articles/technical/intel-trust-domain-extensions.html/}.

\bibitem{li2022systematic}
M.~Li, L.~Wilke, J.~Wichelmann, T.~Eisenbarth, R.~Teodorescu, and Y.~Zhang, ``A systematic look at ciphertext side channels on {AMD} {SEV-SNP},'' in \emph{IEEE Symposium on Security and Privacy (S\&P)}, 2022.

\bibitem{trustzone}
B.~Ngabonziza, D.~Martin, A.~Bailey, H.~Cho, and S.~Martin, ``Trustzone explained: Architectural features and use cases,'' in \emph{International Conference on Collaboration and Internet Computing (CIC)}, 2016.

\bibitem{csv}
``Hygon {C}hina {S}ecure {V}irtualization ({CSV}),'' \url{https://gitee.com/anolis/cloud-kernel/blob/devel-5.10/Documentation/x86/hygon-secure-virtualization.rst}, 2023.

\bibitem{thibault2022blockchain}
L.~T. Thibault, T.~Sarry, and A.~S. Hafid, ``Blockchain scaling using rollups: A comprehensive survey,'' \emph{IEEE Access}, vol.~10, pp. 93\,039--93\,054, 2022.

\bibitem{lavaur2023modular}
T.~Lavaur, J.~Detchart, J.~Lacan, and C.~P. Chanel, ``Modular zk-rollup on-demand,'' \emph{Journal of Network and Computer Applications}, vol. 217, p. 103678, 2023.

\bibitem{lavaur2022enabling}
T.~Lavaur, J.~Lacan, and C.~P. Chanel, ``Enabling blockchain services for {I}o{E} with zk-rollups,'' \emph{Sensors}, vol.~22, p. 6493, 2022.

\bibitem{huang2024data}
C.~Huang, R.~Song, S.~Gao, Y.~Guo, and B.~Xiao, ``Data availability and decentralization: New techniques for zk-rollups in layer 2 {B}lockchain networks,'' \emph{arXiv preprint arXiv:2403.10828}, 2024.

\bibitem{solidity}
``Solidity documentation,'' \url{https://docs.soliditylang.org/en/v0.8.7/}.

\bibitem{calldata}
``{EIP}-4844: {S}hard blob transactions,'' \url{https://eips.ether eum.org/EIPS/eip-4844}.

\bibitem{lind2019teechain}
J.~Lind, O.~Naor, I.~Eyal, F.~Kelbert, E.~G. Sirer, and P.~Pietzuch, ``Teechain: A secure payment network with asynchronous blockchain access,'' in \emph{Proceedings of the 27th ACM Symposium on Operating Systems Principles}, 2019.

\bibitem{tesseract}
I.~Bentov, Y.~Ji, F.~Zhang, L.~Breidenbach, P.~Daian, and A.~Juels, ``Tesseract: Real-time cryptocurrency exchange using trusted hardware,'' in \emph{Proceedings of the 2019 ACM SIGSAC Conference on Computer and Communications Security}, 2019.

\bibitem{yin2022bool}
Z.~Yin, B.~Zhang, J.~Xu, K.~Lu, and K.~Ren, ``Bool network: An open, distributed, secure cross-chain notary platform,'' \emph{IEEE Transactions on Information Forensics and Security}, vol.~17, pp. 3465--3478, 2022.

\bibitem{pose}
T.~Frassetto, P.~Jauernig, D.~Koisser, D.~Kretzler, B.~Schlosser, S.~Faust, and A.-R. Sadeghi, ``{POSE}: Practical off-chain smart contract execution,'' in \emph{Network and Distributed System Security (NDSS) Symposium}, 2023.

\bibitem{xu2022l2chain}
Z.~Xu and L.~Chen, ``L2chain: Towards high-performance, confidential and secure layer-2 blockchain solution for decentralized applications,'' \emph{Proceedings of the VLDB Endowment}, vol.~16, pp. 986--999, 2022.

\bibitem{9705115}
C.~Liu, H.~Guo, M.~Xu, S.~Wang, D.~Yu, J.~Yu, and X.~Cheng, ``Extending on-chain trust to off-chain – trustworthy blockchain data collection using {T}rusted {E}xecution {E}nvironment ({TEE}),'' \emph{IEEE Transactions on Computers}, vol.~71, pp. 3268--3280, 2022.

\bibitem{wen2024mecury}
X.~Wen, Q.~Feng, J.~Niu, Y.~Zhang, and C.~Feng, ``Mecury: Practical cross-chain exchange via trusted hardware,'' \emph{arXiv preprint arXiv:2409.14640}, 2024.

\bibitem{engraft}
W.~Wang, S.~Deng, J.~Niu, M.~K. Reiter, and Y.~Zhang, ``{Engraft: Enclave-guarded Raft on Byzantine faulty nodes},'' in \emph{Proceedings of the 2022 ACM SIGSAC Conference on Computer and Communications Security}, 2022.

\bibitem{narrator}
J.~Niu, W.~Peng, X.~Zhang, and Y.~Zhang, ``Narrator: Secure and practical state continuity for trusted execution in the cloud,'' in \emph{Proceedings of the 2022 ACM SIGSAC Conference on Computer and Communications Security}, 2022.

\bibitem{narrator-pro}
W.~Peng, X.~Li, J.~Niu, X.~Zhang, and Y.~Zhang, ``Ensuring state continuity for confidential computing: A blockchain-based approach,'' \emph{IEEE Transactions on Dependable and Secure Computing}, pp. 1--14, 2024.

\bibitem{niu2025achilles}
J.~Niu, X.~Wen, G.~Wu, S.~Liu, J.~Yu, and Y.~Zhang, ``Achilles: Efficient {TEE}-assisted {BFT} consensus via rollback resilient recovery,'' in \emph{Proceedings of the Twentieth European Conference on Computer Systems}, 2025.

\bibitem{xie2025fides}
S.~Xie, D.~Kang, H.~Lyu, J.~Niu, and M.~Sadoghi, ``Fides: Scalable censorship-resistant dag consensus via trusted components,'' \emph{arXiv preprint arXiv:2501.01062}, 2025.

\bibitem{wang2024tiks}
W.~Wang, J.~Niu, M.~K. Reiter, and Y.~Zhang, ``Formally verifying a rollback-prevention protocol for {TEE}s,'' in \emph{Formal Techniques for Distributed Objects, Components, and Systems}, 2024.

\bibitem{pouyanrad2023end}
G.~Scopelliti, S.~Pouyanrad, J.~Noorman, F.~Alder, C.~Baumann, F.~Piessens, and J.~T. M\"{u}hlberg, ``End-to-end security for distributed event-driven enclave applications on heterogeneous {TEE}s,'' \emph{ACM Transactions on Privacy and Security}, vol.~26, 2023.

\bibitem{liu2015oblivm}
C.~Liu, X.~S. Wang, K.~Nayak, Y.~Huang, and E.~Shi, ``Obli{VM}: {A} programming framework for secure computation,'' in \emph{2015 IEEE Symposium on Security and Privacy}.\hskip 1em plus 0.5em minus 0.4em\relax IEEE, 2015, pp. 359--376.

\bibitem{lang2022mole}
F.~Lang, W.~Wang, L.~Meng, J.~Lin, Q.~Wang, and L.~Lu, ``Mo{LE}: {M}itigation of side-channel attacks against {SGX} via dynamic data location escape,'' in \emph{Proceedings of the 38th Annual Computer Security Applications Conference}, 2022, pp. 978--988.

\bibitem{pass2017formal}
R.~Pass, E.~Shi, and F.~Tramer, ``Formal abstractions for attested execution secure processors,'' in \emph{Annual International Conference on the Theory and Applications of Cryptographic Techniques}, 2017.

\bibitem{eigenlayer}
``Multi-prover {AVS} ({E}igenlayer).'' \url{https://docs.ata.network/tee-overview/multi-prover-avs-eigenlayer}.

\bibitem{tas2023cryptoeconomic}
E.~N. Tas and D.~Boneh, ``Cryptoeconomic security for data availability committees,'' in \emph{International Conference on Financial Cryptography and Data Security}.\hskip 1em plus 0.5em minus 0.4em\relax Springer, 2023, pp. 310--326.

\bibitem{zhang2020analysis}
S.~Zhang and J.-H. Lee, ``Analysis of the main consensus protocols of blockchain,'' \emph{ICT express}, vol.~6, no.~2, pp. 93--97, 2020.

\bibitem{ethereum}
V.~Buterin \emph{et~al.}, ``Ethereum white paper,'' \emph{GitHub repository}, vol.~1, pp. 22--23, 2013.

\bibitem{buchman2016tendermint}
E.~Buchman, ``Tendermint: Byzantine fault tolerance in the age of blockchains,'' Master's thesis, University of Guelph, 2016.

\end{thebibliography}

\end{document}